\newcommand{\imp}{\mathbin{\to}}
\theoremstyle{definition}
\newtheorem{definition}{Definition}[]
\newtheorem{corollary}{Corollary}[]
\title{Combining First-Order Classical and Intuitionistic Logic \\ }
\author{Masanobu Toyooka
\institute{Hokkaido University\\ Sapporo, Japan}
\institute{Graduate School of Letters}
\email{toyooka.masanobu.t1@elms.hokudai.ac.jp}
\and
Katsuhiko Sano
\institute{Hokkaido University\\
Sapporo, Japan}
\institute{Faculty of Humanities and Human Science}
\email{v-sano@let.hokudai.ac.jp}
}
\newcommand{\ljc}{\mathsf{G}(\mathbf{C}+\mathbf{J})}
\newcommand{\fljc}{\mathsf{G}(\mathbf{FOC}+\mathbf{J})}
\newcommand{\fljcc}{\mathsf{G^{-}}(\mathbf{FOC}+\mathbf{J})}
\begin{document}

\maketitle

\begin{abstract}
This paper studies a first-order expansion of a combination $\mathbf{C+J}$ of intuitionistic and classical propositional logic, which was studied by Humberstone (1979) and del Cerro and Herzig (1996), from a proof-theoretic viewpoint. 
While $\mathbf{C+J}$ has both classical and intuitionistic implications, our first-order expansion adds
classical and intuitionistic universal quantifiers and one existential quantifier to $\mathbf{C+J}$. 
This paper provides a multi-succedent sequent calculus $\fljc$ for our combination of the first-order intuitionistic and classical logic. 
Our sequent calculus $\fljc$ restricts contexts of the right rules for intuitionistic implication and intuitionistic universal quantifier to particular forms of formulas. 
The cut-elimination theorem is established to ensure the subformula property. As a corollary, $\fljc$ is conservative over both first-order intuitionistic and classical logic. 
Strong completeness of 
$\fljc$ is proved via a canonical model argument. 
\end{abstract}

\section{Introduction}
\subsection{Introduction and Motivation}
\label{sec:int}
This paper studies a proof-theoretic aspect of a first-order expansion of a combined logic $\mathbf{C+J}$ of intuitionistic and classical propositional logic, which was studied by Humberstone~\cite{Humberstone1979} and del Cerro and Herzig~\cite{Cerro1996}. While $\mathbf{C+J}$ has both classical and intuitionistic implications, our first-order expansion adds
classical and intuitionistic universal quantifiers and one existential quantifier to $\mathbf{C+J}$. In particular, this paper proposes a cut-free sequent calculus called $\fljc$, which is an expansion of the calculus $\ljc$ in~\cite{Toyooka2021a} for the combination of intuitionistic and classical propositional logic. 

There are various semantic methods to combine intuitionistic and classical logic, such as ones in~\cite{Caleiro2007,De2013,De2014}, but this paper follows the semantic treatment in~\cite{Cerro1996,Humberstone1979}, whose main idea is adding intuitionistic and classical implications into one logic called $\mathbf{C+J}$. 
Each implication (expressed as  $\imp_{\mathtt{i}}, \imp_{\mathtt{c}}$, respectively) is interpreted in a Kripke model as follows: 
\[
\begin{array}{lll}
w \models_{M} A \imp_{\mathtt{i}} B & \iff & \text{for all } v \in W, (wRv \text{ and } v \models_{M} A \text{ jointly imply } v \models_{M} B),\\
w \models_{M} A \imp_{\mathtt{c}} B & \iff & w \models_{M} A \text{ implies } w \models_{M} B,\\ 
\end{array}
\]
where $M$ is an intuitionistic Kripke model, $w$ is a possible world in $M$, and $R$ is a preorder equipped in $M$.

It is well-known that an intuitionistically valid formula $A \imp_{\mathtt{i}} (B \imp_{\mathtt{i}} A)$ corresponds to the property called {\em heredity} with respect to $A$ in intuitionistic Kripke semantics, which is defined as: $w\models A$ and $wRv$ jointly imply $v \models A$ for all Kripke models $M$ and all states $w$ and $v$ in $M$\footnote{The correspondence between heredity and the formula $A \imp_{\mathtt{i}} (B \imp_{\mathtt{i}} A)$ is mentioned in~\cite{Jongh2018,Jongh2015,Restall1994}}. However, the existence of classical implication breaks this heredity in the Kripke semantics of $\mathbf{C + J}$, i.e., there is a Kripke model where $\neg_{\mathtt{c}} p \imp_{\mathtt{i}} (\top \imp_{\mathtt{i}} \neg_{\mathtt{c}}p)$ is not valid. 
Because of this semantic phenomenon, del Cerro and Herzig~\cite{Cerro1996} added to their Hilbert system for $\mathbf{C + J}$ the following syntactically restricted axiom:
\[
\texttt{(PER)} A \imp_{\mathtt{c}} (B \imp_{\mathtt{i}} A)^{\dag} \quad \dag \text { $A$ is a persistent formula}, 
\]
where a persistent formula can be understood  essentially as a formula of the form atomic $p$ or $A \imp_{\mathtt{i}} B$ ($A$ and $B$ possibly contain the classical implication)\footnote{The reason why we use the expression ``essentially'' is the definition provided in~\cite{Cerro1996} is slightly different.}.

By employing the idea of the axiom $\texttt{(PER)}$, a cut-free sequent calculus $\ljc$ of $\mathbf{C+J}$ was proposed in~\cite{Toyooka2021a} where the right rule for the intuitionistic implication is restricted as follows:
\[
\infer[(\Rightarrow \imp_{\mathtt{i}})]{C_{1} \imp_{\mathtt{i}} D_{1}, \ldots, C_{m} \imp_{\mathtt{i}} D_{m}, p_{1}, \ldots p_{n} \Rightarrow A \imp_{\mathtt{i}} B}{C_{1} \imp_{\mathtt{i}} D_{1}, \ldots, C_{m} \imp_{\mathtt{i}} D_{m}, p_{1}, \ldots p_{n}, A \Rightarrow B},
\]
while the ordinary right rule for the intuitionistic implication is of the following form:
\[
\infer{\Gamma \Rightarrow A \to_{\mathtt{i}} B}{\Gamma, A \Rightarrow B}.
\]

\noindent  Even though this restriction is imposed, a sequent derivable in intuitionistic logic is also derivable in $\ljc$. This is because the ordinary rule is derivable in $\ljc$ if the context of the rule contains only intuitionistic formulas. The ordinary rule is also derivable in the calculus obtained from intuitionistic propositional sequent calculus by replacing it with the restricted one. Therefore, the restricted version of the rule captures the core of the original one. 

When expanding $\mathbf{C+J}$ to a first-order syntax, we add intuitionistic and classical universal quantifiers (expressed as $\forall_{\mathtt{i}}, \forall_{\mathtt{c}}$, respectively) and one existential quantifier (expressed as $\exists$). Each universal quantifier is interpreted in a Kripke model as follows:
\[
\begin{array}{lll}
w \models_{M} \forall_{\mathtt{i}}x A & \iff & \text{ for all } v \in W, (wRv \text{ implies } \text{for all } d \in D(v), v \models_{M} A[\underline{d} / x]),
\\
w \models_{M} \forall_{\mathtt{c}}x A & \iff & \text{ for all } d \in D(w), w \models_{M} A[\underline{d} / x],\\
\end{array}
\]
where $M$ is a Kripke model for first-order intuitionistic logic and $\underline{d}$ is a syntactic name of an element $d$ in a domain. As a Kripke model for propositional intuitionistic logic, $w$ and $v$ are possible worlds in $M$ and $R$ is a preorder equipped in $M$.

Similar to the classical implication, the classical universal quantifier also breaks the heredity in a Kripke semantics, while the intuitionistic one does not. Therefore, we have to regard a formula of the form $\forall_{\mathtt{i}}x A$ as a persistent formula. This consideration enables us to expand $\ljc$ naturally to $\fljc$. 
In $\fljc$, the right rule for the intuitionistic implication can be modified to cover the new notion of persistent formula in the first-order syntax. 
Moreover, the right rule for the intuitionistic universal quantifier is defined as follows:
\[
\infer[(\Rightarrow \forall_{\mathtt{i}})]{\forall_{\mathtt{i}}x_{1} B_{1}, \ldots, \forall_{\mathtt{i}}x_{l} B_{l}, C_{1} \imp_{\mathtt{i}} D_{1}, \ldots, C_{m} \imp_{\mathtt{i}} D_{m}, p_{1}, \ldots p_{n} \Rightarrow \forall_{\mathtt{i}}x A}{\forall_{\mathtt{i}}x_{1} B_{1}, \ldots, \forall_{\mathtt{i}}x_{l} B_{l}, C_{1} \imp_{\mathtt{i}} D_{1}, \ldots, C_{m} \imp_{\mathtt{i}} D_{m}, p_{1}, \ldots p_{n} \Rightarrow A[z/x]},
\]
\noindent where $z$ must not occur free in the lower sequent. 
This paper establishes that this calculus enjoys the cut-elimination, which guarantees the subformula property, and shows that the calculus is sound and semantically complete with respect to the class of all intuitionistic Kripke models. 

This paper is structured as follows. Section \ref{sec:sks} provides our syntax, Kripke semantics for it, and the sequent calculus $\fljc$. The soundness is shown in the section. Section \ref{sec:cut} demonstrates the cut elimination theorem. Section \ref{sec:comp} establishes the strong completeness for Kripke semantics via a canonical model argument. Section \ref{sec:caf} concludes the paper and gives further direction of research. 

\subsection{Related Work}
\label{sub:rw}
An attempt related to our approach the most was the one in~\cite{Lucio2000}. In~\cite{Lucio2000}, a sequent calculus for a first-order combination of intuitionistic and classical logic, denoted by $\mathbf{FO^{\supset}}$, was given. This calculus is provided by adding intuitionistic implication to first-order classical logic, so it has only one universal quantifier (a classical one). Therefore, this study is also regarded as a different first-order expansion of $\mathbf{C+J}$ from ours. From our perspective, however, the intuitionistic universal quantifier is missing in~\cite{Lucio2000}.
In order to deal with the failure of the heredity mentioned in Section \ref{sec:int}, the calculus employs the notion of {\em structured} single succedent sequent of the form: $\Gamma_{1}; \cdots ;\Gamma_{n} \Rightarrow A$, where $\Gamma_{i}$ is a finite set of formulas.  For example, $B_{1}, B_{2}; C_{1}, C_{2}, C_{3}; D \Rightarrow A$ is a structured sequent and its semantic interpretation (see \cite[Definition 16]{Lucio2000}) is: for any $w_{1}, w_{2}, w_{3}$ such that $w_{1} \leqslant w_{2} \leqslant w_{3}$, if $w_{1} \models B_{i}$ for all $i$, $w_{2} \models C_{j}$ for all $j$ and $w_{3} \models D$, then $w_{3} \models A$. Thanks to this notion, no restriction on contexts is needed for sequents in $\mathbf{FO^{\supset}}$. It is noted that some rule for $\mathbf{FO^{\supset}}$ does not satisfy the subformula property. Lucio~\cite{Lucio2000} proved that  $\mathbf{FO^{\supset}}$ is sound and complete for the intended Kripke semantics.

Other than $\mathbf{C + J}$, there are many attempts to combine intuitionistic and classical logic. In~\cite{Caleiro2007}, a logic called $\mathbf{CIPL}$, whose syntax consists of propositional variables and intuitionistic and classical implications, was proposed, and a Hilbert system and a Kripke semantics were given. The Kripke semantics has an interpretation of classical implication different from that of $\mathbf{C+J}$. 
In~\cite{De2013,De2014}, the logic called $\mathbf{IPC}^{\sim}$ was given by adding a negation called ``empirical negation'' to intuitionistic logic. This negation may be regarded as classical negation, but the semantic interpretation of empirical negation was  different from classical one. That is, the empirical negation is evaluated at a base state of a Kripke model where a base state can see all states. 
Also,  a semantic consequence relation in~\cite{De2013,De2014} is defined in terms of base states of Kripke models.
It should be noted that all formulas satisfy heredity in $\mathbf{IPC}^{\sim}$. 
In~\cite{De2013,De2014}, a Hilbert system was given for $\mathbf{IPC}^{\sim}$. 
We, however, emphasize that all of 
\cite{Caleiro2007,De2013,De2014} stay at the propositional level. Our study of a {\em first-order} expansion of $\mathbf{C + J}$ may be useful to expand these systems to the first-order level.

Prawitz~\cite{Prawitz2015} provided a system called {\em ecumenical system} as a natural deduction system. The underlying syntax of this system is obtained by adding classical implication, disjunction, and existential quantifier to a syntax of intuitionistic logic.
An interesting syntactic feature of this system is that it has only one negation (regarded as intuitionistic negation), while two implications exist. 
A Kripke semantics and a sequent calculus for the propositional fragment were given in~\cite{Pereira2017} and~\cite{Pimentel2018} respectively, and a Kripke semantics and a sequent calculus $\mathbf{LEci}$ of the full fragment were proposed in~\cite{Elaine2019}. 
The main idea of the system is defining a classical logical connectives or quantifier in terms of intuitionistic ones. 
For example, the interpretation for the classical implication is defined as follows: $w \models A \imp_{\mathtt{c}} B$ iff $w \models \neg (A \land \neg B)$, where $\neg$ is intuitionistic negation. 
Corresponding to this interpretation, the right rule for the classical implication is defined as follows:
\[
\infer{\Gamma \Rightarrow A \imp_{\mathtt{c}} B}{\Gamma, A, \neg B \Rightarrow \bot}.
\]
\noindent It is noted that cut-elimination was shown in~\cite{Elaine2019}, but the calculus does not satisfy the subformula property as we can easily see from the above right rule of the classical implication. It is remarked that, although our syntax is different from~\cite{Pereira2017,Pimentel2018,Elaine2019,Prawitz2015}, our sequent calculus is fully analytic, i.e., all the inference rules except the rule of cut satisfy the subformula property and the calculus also enjoys the cut-elimination theorem. 

\section{Syntax, Kripke Semantics, and Sequent calculus $\fljc$}
\label{sec:sks}
\subsection{Syntax}
\label{subsec:syntax}
This section introduces the syntax for $\fljc$. Our syntax $\mathcal{L}$ consists of the following:
\begin{itemize}
    \item A countably infinite set of variables $\mathsf{Var}$ := $\{ x_{1}, x_{2}, \ldots \}$,
    \item A countably infinite set of constant symbols $ \{ c_{1}, c_{2}, \ldots \}$,
    \item A countably infinite set of predicate symbols $\{ P_{1}, P_{2}, \ldots \}$,
    \item Logical connectives: $\bot, \land, \lor, \imp_{\mathtt{i}}, \imp_{\mathtt{c}}$,
    \item Quantifiers: $\forall_{\mathtt{i}}, \forall_{\mathtt{c}}, \exists$.
\end{itemize} 

\noindent The intuitionistic and classical implications and universal quantifiers are denoted by $\imp_{\mathtt{i}}$, $\imp_{\mathtt{c}}$ and $\forall_{\mathtt{i}}$, $\forall_{\mathtt{c}}$, respectively. Only one disjunction and existential quantifier are contained, since their satisfaction relations in standard Kripke semantics are the same between classical logic and intuitionistic logic. We denote by $\mathcal{L}_{\mathbf{C}}$ (the syntax for the classical logic) and $\mathcal{L}_{\mathbf{J}}$ (the syntax for the intuitionistic logic) the resulting syntax dropping $\imp_{\mathtt{i}}$ and $\forall_{\mathtt{i}}$, $\imp_{\mathtt{c}}$ and $\forall_{\mathtt{c}}$, respectively. We define $\top$ := $\bot \imp_{\mathtt{i}} \bot$, $\neg_{\mathtt{c}} A$ := $A \imp_{\mathtt{c}} \bot$, and $\neg_{\mathtt{i}} A$ := $A \imp_{\mathtt{i}} \bot$. 

{\em Terms} consist of variables and constant symbols and are denoted by $t_{1}, t_{2}, \ldots $. A constant symbol is called a {\em closed term}, since it has no occurrence of free variables. 
The set of all formulas $\mathsf{Form}_{\mathcal{L}}$ (often written as $\mathsf{Form}$) for the syntax $\mathcal{L}$, is defined inductively as follows:
\[
A ::= P(t_{1}, \ldots, t_{n}) \,|\, \perp \,|\, A \lor A \,|\, A \land A \,|\, A \imp_{\mathtt{i}} A \,|\, A \imp_{\mathtt{c}} A \, | \, \forall_{\mathtt{i}}x A, \, | \, \forall_{\mathtt{c}}x A \, | \, \exists x A, 
\]
where $P$ denotes a predicate symbol. We denote by $\mathsf{Form}_{\mathbf{C}}$ and $\mathsf{Form}_{\mathbf{J}}$ the set of all classical formulas and the set of all intuitionistic formulas in $\mathcal{L}$, respectively.  
The set of free variables in a formula $A$ is denoted by $\mathsf{FV}(A)$. 
We define a {\em closed formula} as a formula which has no occurrence of a free variable. We employ the notion of {\em clash avoiding substitution} $[t/x]$ when we do substitution in a formula, as~\cite{Ono1994a}. By employing this notion, we can avoid the case where $t$ becomes a bound variable in the formula as an effect of substitution of $t$ for $x$. We also consider a syntax $L$ different from $\mathcal{L}$, which  contains all the logical connectives and the set of predicate and constant symbols of $\mathcal{L}$. That is, $L$ and $\mathcal{L}$ are different only in a set of variables. We denote the set of variables in a syntax $L$ by $\mathsf{Var}(L)$, and the set of all formulas in a syntax $L$ is denoted by $\mathsf{Form}_{L}$. We call a formula $A$ an {\em $L$-formula} if $A \in \mathsf{Form}_{L}$.

\subsection{Semantics}
\label{subsec:semex}
Let us move to the semantics for our syntax $\mathcal{L}$. We give a valuation and a satisfaction relation only to closed formulas, and deal with possibly non-closed formulas in the definition of semantic consequence relation (Definition \ref{def:sec}). First, we define a Kripke frame $F$ and then proceed to defining valuation $V$.

\begin{definition}
\label{def:model3}
 A {\em Kripke frame} is a tuple $F$ = $(W,R,(D(w))_{w \in W})$ where 
 \begin{itemize}
     \item $W$ is a non-empty set of possible worlds,
     \item $R$ is a preorder on $W$, i.e., $R$ satisfies reflexivity and transitivity,
     \item $D(w)$ is a non-empty set,
 \end{itemize}
 and satisfies the following:
 \begin{itemize}
     \item For all $w, v \in W, wRv$ implies $D(w) \subseteq D(v)$,
     \item ${\bigcap}_{w\in W} D(w) \neq \emptyset$.
 \end{itemize}
\end{definition} 

\begin{definition}
\label{def:model4}
A valuation $V$ on a Kripke frame $(W,R,(D(w))_{w \in W})$ is defined as the following:
\begin{description}
        \item[(rigidity for constants)] $V(c) \in {\bigcap}_{w \in W} D(w)$ ($c$ is a constant symbol),
    \item[(heredity for predicates)] $V(P,w) \subseteq D(w)^{n} $ such that for all $w, v \in W, (wRv$ implies  $V(P,w) \subseteq V(P,v))$.
\end{description}
\end{definition}

\noindent We define a quadruple $M$ = $(W,R,(D(w))_{w \in W},V)$ as a Kripke model.

Let $\mathcal{L}( A )$ be an expanded syntax of $\mathcal{L}$ by the constant symbols $\{ \underline{a} | a \in A\}$ to all the elements of $A$.

\begin{definition}
\label{def:model5}
For all closed terms $t$ of $\mathcal{L}( D(w) )$, $V(t)$ is defined as the following:
\begin{itemize}
    \item $V(t)$ = $d$ if $t \equiv \underline{d}$ and $d \in D(w)$,
    \item $V(t)$ = $V(c)$ if $t \equiv c$.
\end{itemize}
\end{definition}

\begin{definition}
\label{def:model6}
Given a model $M$ = $(W,R,(D(w))_{w \in W},V)$, a world $w \in W$ and a closed formula $A$ of $\mathcal{L}( D(w) )$, the {\em satisfaction relation} $w \models_{M} A$ is defined inductively as follows: 
\[
\begin{array}{lll}
w \models_{M} P(t_{1}, \ldots, t_{m}) & \iff& \langle V(t_{1}), \ldots, V(t_{m}) \rangle \in V(P,w),\\
w \not\models_{M} \bot & \text{}\\
w \models_{M} A \land B & \iff & w \models_{M} A \text{ and } w \models_{M} B,\\
w \models_{M} A \lor B & \iff & w \models_{M} A \text{ or } w \models_{M} B,\\
w \models_{M} A \imp_{\mathtt{i}} B & \iff & \text{for all } v \in W, (wRv \text{ and } v \models_{M} A \text{ jointly imply } v \models_{M} B),\\
w \models_{M} A \imp_{\mathtt{c}} B & \iff & w \models_{M} A \text{ implies } w \models_{M} B,\\
w \models_{M} \forall_{\mathtt{i}}x A & \iff & \text{ for all } v \in W, (wRv \text{ implies } \text{for all } d \in D(v), v \models_{M} A[\underline{d} / x]),
\\
w \models_{M} \forall_{\mathtt{c}}x A & \iff & \text{ for all } d \in D(w), w \models_{M} A[\underline{d} / x],\\
w \models_{M} \exists x A & \iff & \text{ for some } d \in D(w), w \models_{M} A[\underline{d} / x].\\
\end{array}
\]
\end{definition}

\noindent We say that a closed formula $A$ satisfies {\em heredity} if, for every model $M$ = $(W,R,(D(w))_{w \in W},V)$ and every $w, v \in W$, $w \models_{M} A$ and $w R v$ jointly imply $v \models_{M} A$. The notion of semantic consequence relation is defined as below.

\begin{definition}
\label{def:sec}
Suppose $\Gamma \cup \{ A \} \subseteq \mathsf{Form}$. Then, the {\em semantic consequence} relation $\Gamma \models A$ is defined as follows:
For all $M$ = $(W,R,(D(w))_{w \in W},V)$, all $w \in W$ and all $d:\mathsf{Var} \to D(w)$, if\linebreak $w \models_{M} C  [\underline{d (x_{1})}/x_{1}]\cdots [\underline{d (x_{m})}/x_{m}]$ for any $C \in \Gamma$, then $w \models A[\underline{d (z_{1})}/z_{1}] \cdots [\underline{d (z_{n})}/z_{n}]$, where $x_{1},\ldots, x_{m}$ are all free variables in $C$ and $z_{1}, \ldots, z_{n}$ are all free variables in $A$. 
A formula $A$ is {\em valid} if $\models A$ holds.
\end{definition}

\noindent By Definition \ref{def:sec}, we can consider semantic consequence relation not only of closed but also of possibly non-closed formulas.

We proceed to the matter of heredity in a Kripke model. Heredity is known as an important feature of pure intuitionistic logic, i.e., Fact \ref{fact:her} holds.

\begin{fact}
\label{fact:her}
All closed formulas $A \in \mathsf{Form}_{\mathbf{J}}$ i.e., the set of formulas expressive in the syntax of the intuitionistic logic, satisfy heredity.
\end{fact}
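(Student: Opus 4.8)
The plan is to prove Fact~\ref{fact:her} by induction on the structure of the closed formula $A \in \mathsf{Form}_{\mathbf{J}}$. The intuitionistic syntax $\mathcal{L}_{\mathbf{J}}$ contains only $\bot, \land, \lor, \imp_{\mathtt{i}}, \forall_{\mathtt{i}}, \exists$, so these are the only cases to treat. Fix a model $M = (W,R,(D(w))_{w\in W},V)$ and worlds $w,v \in W$ with $wRv$; I will show $w \models_M A$ implies $v \models_M A$. One subtlety to flag at the outset: the induction must be phrased carefully because the subformulas obtained by instantiating a quantifier (e.g.\ $A[\underline{d}/x]$) are formulas of the expanded syntax $\mathcal{L}(D(w))$, not of $\mathcal{L}$ itself; so the induction should really be on the complexity (e.g.\ number of connective/quantifier occurrences) of $A$, ranging over closed $\mathcal{L}(D(u))$-formulas built from intuitionistic connectives, with $u$ ranging over worlds. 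Note also that since $wRv$ gives $D(w) \subseteq D(v)$, any closed $\mathcal{L}(D(w))$-formula is also a closed $\mathcal{L}(D(v))$-formula, so the satisfaction relation $v \models_M A$ is well-defined.

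The base cases are the atomic formulas. For $A = P(t_1,\dots,t_m)$: each $t_i$ is a constant symbol of $\mathcal{L}(D(w))$, so $V(t_i)$ does not depend on the world (it is either $V(c)$ for a genuine constant, which lies in $\bigcap_{u}D(u)$ by rigidity, or it is $d$ for a name $\underline d$), and heredity for predicates gives $V(P,w) \subseteq V(P,v)$; hence $\langle V(t_1),\dots,V(t_m)\rangle \in V(P,w)$ implies the same membership in $V(P,v)$. For $A = \bot$: the implication holds vacuously since $w \not\models_M \bot$. The cases $\land$ and $\lor$ are immediate from the induction hypothesis applied to the immediate subformulas. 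The case $A = B \imp_{\mathtt{i}} C$ uses transitivity of $R$: if $w \models_M B \imp_{\mathtt{i}} C$ and $wRv$, then for any $u$ with $vRu$ we have $wRu$ by transitivity, so $u \models_M B$ implies $u \models_M C$; thus $v \models_M B \imp_{\mathtt{i}} C$ — note this case needs no induction hypothesis at all. Similarly $A = \forall_{\mathtt{i}}x\,B$ uses transitivity directly: from $w \models_M \forall_{\mathtt{i}}x\,B$ and $vRu$ we get $wRu$, hence $u \models_M B[\underline d/x]$ for all $d \in D(u)$, giving $v \models_M \forall_{\mathtt{i}}x\,B$.

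The only case requiring genuine care is $A = \exists x\,B$. Suppose $w \models_M \exists x\,B$; then there is some $d \in D(w)$ with $w \models_M B[\underline d/x]$. Since $wRv$ gives $D(w) \subseteq D(v)$, we have $d \in D(v)$, and $B[\underline d/x]$ is a closed intuitionistic $\mathcal{L}(D(w))$-formula of strictly smaller complexity than $\exists x\,B$, so the induction hypothesis applies and yields $v \models_M B[\underline d/x]$; hence $v \models_M \exists x\,B$ witnessed by the same $d$. I expect the main (though still mild) obstacle to be purely bookkeeping: setting up the induction metric so that instantiation $B \mapsto B[\underline d/x]$ counts as a complexity decrease, and making explicit the domain-inclusion step $d \in D(w) \subseteq D(v)$ that keeps the witness available at $v$ — everything else is routine unfolding of Definition~\ref{def:model6}.
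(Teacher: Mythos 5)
Your proof is correct: the induction on complexity (over closed formulas of the expanded languages $\mathcal{L}(D(u))$), the use of transitivity of $R$ for the $\imp_{\mathtt{i}}$ and $\forall_{\mathtt{i}}$ cases, and the domain-inclusion step $D(w)\subseteq D(v)$ for the $\exists$ witness are exactly the standard argument. The paper states Fact~\ref{fact:her} without proof as a known result, and your write-up supplies precisely the argument it implicitly relies on, so there is nothing to compare beyond noting that your extra care about the induction metric and the expanded syntax is sound and appropriate.
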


\noindent However, this feature is lost when we add classical implication and universal quantifier to the intuitionistic logic. Consider a predicate $P$ whose arity is one, and consider a model $M$ = $(W,R,(D(w))_{w \in W},V)$ such that $W=\{w ,v \},R=\{(w, w),(w, v), (v,v)\}, D(w) = D(v) = \{ d \}$, $V(c)$ = $d \in D(w) \cap D(v)$, $ d  \notin V(P,w)$, and $d \in V(P,v)$. In this model, $wRv$ and $w \models_{M} \neg_{\mathtt{c}} P(c)$ hold, but $v \not\models_{M} \neg_{\mathtt{c}} P(c)$ holds. Moreover, $w \not\models_{M} \top \imp_{\mathtt{c}}  \neg_{\mathtt{c}} P(c)$ and $w \not\models_{M} \top \imp_{\mathtt{i}}  \neg_{\mathtt{c}} P(c)$. These arguments give us the following propositions.

\begin{prop}
\label{prop:per1}
A formula $\neg_{\mathtt{c}} P(c)$ does not satisfy heredity. 
\end{prop}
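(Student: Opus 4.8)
The plan is to exhibit a single concrete Kripke model together with two worlds $w, v$ with $wRv$ such that $w \models_M \neg_{\mathtt c} P(c)$ but $v \not\models_M \neg_{\mathtt c} P(c)$, which by definition of heredity suffices. Fortunately, the paragraph immediately preceding the proposition already constructs exactly such a model, so the proof is essentially a matter of verifying the satisfaction claims in that model against Definition \ref{def:model6}.

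Concretely, I would take $M = (W, R, (D(w))_{w \in W}, V)$ with $W = \{w, v\}$, $R = \{(w,w), (w,v), (v,v)\}$ (this is a preorder, being reflexive and transitive), $D(w) = D(v) = \{d\}$ (so the domain-inclusion and nonempty-intersection conditions of Definition \ref{def:model3} hold trivially), $V(c) = d$, $V(P, w) = \emptyset$, and $V(P, v) = \{d\}$. I would first check this is a legitimate Kripke model: the only nontrivial point is heredity for the predicate $P$ along $R$, and since $wRv$ demands $V(P,w) \subseteq V(P,v)$, i.e. $\emptyset \subseteq \{d\}$, this holds; all other pairs in $R$ are reflexive loops. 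Then: since $V(c) = d \notin V(P,w)$, we have $w \not\models_M P(c)$, hence $w \models_M P(c) \imp_{\mathtt c} \bot$, i.e. $w \models_M \neg_{\mathtt c} P(c)$, because the antecedent of the classical-implication clause fails. On the other hand, $V(c) = d \in V(P,v)$ gives $v \models_M P(c)$, while $v \not\models_M \bot$, so by the classical-implication clause $v \not\models_M P(c) \imp_{\mathtt c} \bot$, i.e. $v \not\models_M \neg_{\mathtt c} P(c)$.

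Putting these together, $w \models_M \neg_{\mathtt c} P(c)$ and $wRv$ but $v \not\models_M \neg_{\mathtt c} P(c)$, so $\neg_{\mathtt c} P(c)$ does not satisfy heredity, as required. There is essentially no obstacle here: the only thing to be slightly careful about is making sure the chosen $V(P, \cdot)$ really does respect the heredity-for-predicates requirement of Definition \ref{def:model4} (it is tempting but illegitimate to instead put $d \in V(P,w)$ and $d \notin V(P,v)$, which would violate it), and that the classical implication is read with its non-hereditary local clause rather than the intuitionistic one. Given the groundwork already laid in the text, the write-up is just a short verification.
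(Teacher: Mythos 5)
Your proposal is correct and is essentially identical to the paper's own argument: the paper likewise uses the two-world model with $d \notin V(P,w)$, $d \in V(P,v)$, and reads off $w \models_{M} \neg_{\mathtt{c}} P(c)$ but $v \not\models_{M} \neg_{\mathtt{c}} P(c)$ from the local clause for classical implication. Your additional checks that the structure really is a Kripke model (predicate heredity along $R$) are a sound, if routine, elaboration of what the paper leaves implicit.
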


\begin{prop}
\label{prop:per2}
Neither $\neg_{\mathtt{c}} P(c) \imp_{\mathtt{i}} (\top \imp_{\mathtt{i}}  \neg_{\mathtt{c}} P(c))$ nor $\neg_{\mathtt{c}} P(c) \imp_{\mathtt{c}} (\top \imp_{\mathtt{i}} \neg_{\mathtt{c}} P(c))$ is valid. 
\end{prop}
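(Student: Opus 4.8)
The plan is to reuse the very model $M$ already exhibited in the paragraph preceding Proposition~\ref{prop:per1}, so that no new construction is required. Recall that there $M = (W,R,(D(w))_{w\in W},V)$ has $W=\{w,v\}$, $R=\{(w,w),(w,v),(v,v)\}$, $D(w)=D(v)=\{d\}$, $V(c)=d$, $d\notin V(P,w)$ and $d\in V(P,v)$; one first checks (it is immediate) that this $M$ really is a Kripke model in the sense of Definitions~\ref{def:model3} and~\ref{def:model4} --- $R$ is a preorder, $wRv$ yields $D(w)\subseteq D(v)$, $\bigcap_{u\in W}D(u)=\{d\}\neq\emptyset$, $V(c)=d$ lies in this intersection, and $V(P,w)=\emptyset\subseteq\{d\}=V(P,v)$ witnesses heredity for $P$. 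From the discussion in the text we already have $w \models_{M} \neg_{\mathtt{c}} P(c)$ (since $w\not\models_{M} P(c)$), $v \not\models_{M} \neg_{\mathtt{c}} P(c)$ (since $v\models_{M} P(c)$), and, using $wRv$ together with $v\models_{M}\top$ and $v\not\models_{M}\neg_{\mathtt{c}}P(c)$, the key fact $w \not\models_{M} \top \imp_{\mathtt{i}} \neg_{\mathtt{c}} P(c)$.

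For the classical conditional I would evaluate $\neg_{\mathtt{c}} P(c) \imp_{\mathtt{c}} (\top \imp_{\mathtt{i}} \neg_{\mathtt{c}} P(c))$ at $w$ using the clause for $\imp_{\mathtt{c}}$ in Definition~\ref{def:model6}: since $w\models_{M}\neg_{\mathtt{c}}P(c)$ while $w\not\models_{M}\top\imp_{\mathtt{i}}\neg_{\mathtt{c}}P(c)$, the antecedent holds at $w$ but the consequent does not, so $w\not\models_{M}\neg_{\mathtt{c}} P(c) \imp_{\mathtt{c}} (\top \imp_{\mathtt{i}} \neg_{\mathtt{c}} P(c))$. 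For the intuitionistic conditional I would evaluate $\neg_{\mathtt{c}} P(c) \imp_{\mathtt{i}} (\top \imp_{\mathtt{i}} \neg_{\mathtt{c}} P(c))$ at $w$ using the clause for $\imp_{\mathtt{i}}$: it suffices to exhibit one $u$ with $wRu$, $u\models_{M}\neg_{\mathtt{c}}P(c)$ and $u\not\models_{M}\top\imp_{\mathtt{i}}\neg_{\mathtt{c}}P(c)$, and taking $u=w$ works because $R$ is reflexive and the other two facts were just recorded. Hence $w\not\models_{M}\neg_{\mathtt{c}} P(c) \imp_{\mathtt{i}} (\top \imp_{\mathtt{i}} \neg_{\mathtt{c}} P(c))$ as well.

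Finally, since both formulas are closed, Definition~\ref{def:sec} tells us that a formula is valid only if it is satisfied at every world of every model; having found the single world $w$ of $M$ at which each of the two formulas fails, I conclude that neither is valid. There is essentially no obstacle here: the whole argument is a direct unwinding of the satisfaction clauses of Definition~\ref{def:model6}, and the only point needing a moment's care --- already implicit in the discussion around Fact~\ref{fact:her} and Proposition~\ref{prop:per1} --- is that $M$ genuinely meets all the frame and valuation conditions of Definitions~\ref{def:model3}--\ref{def:model4}. The conceptual content is simply that the failure of heredity for $\neg_{\mathtt{c}}P(c)$ is exactly what propagates to the failure of both \texttt{(PER)}-style validities.
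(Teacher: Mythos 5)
Your proposal is correct and follows essentially the same route as the paper, which establishes Proposition~\ref{prop:per2} by the countermodel $M$ exhibited just before Proposition~\ref{prop:per1}: the key facts $w\models_{M}\neg_{\mathtt{c}}P(c)$ and $w\not\models_{M}\top\imp_{\mathtt{i}}\neg_{\mathtt{c}}P(c)$ refute the classical conditional directly and, via reflexivity of $R$ at $w$, the intuitionistic one as well. Your write-up merely makes explicit what the paper leaves implicit (the check that $M$ satisfies Definitions~\ref{def:model3}--\ref{def:model4} and the unwinding of the $\imp_{\mathtt{i}}$ clause with $u=w$), so there is nothing to add.
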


\noindent By Proposition \ref{prop:per2}, it is obvious that an intuitionistic tautology $A \imp_{\mathtt{i}} (B \imp_{\mathtt{i}} A)$ is no longer valid. As we have mentioned in Section \ref{sec:int}, this fact affects the construction of the Hilbert system in~\cite{Cerro1996}. 
A similar phenomenon also happens about classical universal quantifier. Consider the same model $M$ as above. Then, $wRv$ and $w \models_{M} \neg_{\mathtt{c}} \forall_{\mathtt{c}}x P(x)$ hold, but $v \not\models_{M} \neg_{\mathtt{c}} \forall_{\mathtt{c}}x P(x)$. Therefore, we get the following proposition.

\begin{prop}
\label{prop:nhc}
A formula $\neg_{\mathtt{c}} \forall_{\mathtt{c}}x P(x)$ does not satisfy heredity.
\end{prop}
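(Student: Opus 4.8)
The plan is to reuse the concrete countermodel $M$ already exhibited in the paragraph preceding the proposition, namely $M = (W,R,(D(w))_{w\in W},V)$ with $W=\{w,v\}$, $R=\{(w,w),(w,v),(v,v)\}$, $D(w)=D(v)=\{d\}$, $V(c)=d$, $d\notin V(P,w)$ and $d\in V(P,v)$. First I would note that $M$ is a genuine Kripke model: $R$ is reflexive and transitive, the domains are non-empty and satisfy $D(w)\subseteq D(v)$ along $wRv$ as well as $\bigcap_{u\in W}D(u)=\{d\}\neq\emptyset$, and heredity for the predicate $P$ holds because the only non-trivial instance is $wRv$, where $V(P,w)=\emptyset\subseteq\{d\}=V(P,v)$.

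Next I would compute the relevant satisfaction facts. Since $D(w)=\{d\}$ and $d\notin V(P,w)$, we get $w\not\models_{M}P(\underline{d})$, hence $w\not\models_{M}\forall_{\mathtt{c}}x\,P(x)$ by the clause for $\forall_{\mathtt{c}}$. Because $\neg_{\mathtt{c}}A$ abbreviates $A\imp_{\mathtt{c}}\bot$ and the clause for $\imp_{\mathtt{c}}$ reads $w\models_{M}A\imp_{\mathtt{c}}\bot$ iff ($w\models_{M}A$ implies $w\models_{M}\bot$), the failure $w\not\models_{M}\forall_{\mathtt{c}}x\,P(x)$ makes the implication vacuously true, so $w\models_{M}\neg_{\mathtt{c}}\forall_{\mathtt{c}}x\,P(x)$. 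Symmetrically, at $v$ we have $D(v)=\{d\}$ and $d\in V(P,v)$, so $v\models_{M}P(\underline{d})$ and therefore $v\models_{M}\forall_{\mathtt{c}}x\,P(x)$; since $v\not\models_{M}\bot$, the implication in $\forall_{\mathtt{c}}x\,P(x)\imp_{\mathtt{c}}\bot$ fails, i.e. $v\not\models_{M}\neg_{\mathtt{c}}\forall_{\mathtt{c}}x\,P(x)$.

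Finally I would conclude: $wRv$ holds and $w\models_{M}\neg_{\mathtt{c}}\forall_{\mathtt{c}}x\,P(x)$ while $v\not\models_{M}\neg_{\mathtt{c}}\forall_{\mathtt{c}}x\,P(x)$, which is exactly a violation of the heredity condition of Definition \ref{def:model6}'s accompanying text for the formula $\neg_{\mathtt{c}}\forall_{\mathtt{c}}x\,P(x)$. There is essentially no hard step here; the only point demanding a little care is making sure the stipulated $M$ really meets all the frame and valuation constraints of Definitions \ref{def:model3} and \ref{def:model4} (in particular heredity for $P$), and correctly unwinding the abbreviation $\neg_{\mathtt{c}}$ through the vacuous/non-vacuous cases of the $\imp_{\mathtt{c}}$ clause. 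The argument is completely parallel to the one behind Proposition \ref{prop:per1}, with $\forall_{\mathtt{c}}x\,P(x)$ playing the role that $P(c)$ played there.
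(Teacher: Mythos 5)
Your proof is correct and follows exactly the paper's own argument: it uses the same two-world countermodel $M$ introduced before Proposition \ref{prop:per1} and checks that $w\models_{M}\neg_{\mathtt{c}}\forall_{\mathtt{c}}x\,P(x)$, $wRv$, and $v\not\models_{M}\neg_{\mathtt{c}}\forall_{\mathtt{c}}x\,P(x)$. You simply spell out the satisfaction computations and the well-formedness of $M$ that the paper leaves implicit.
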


\noindent Let us still consider the same model $M$. Then, $w \models_{M} \forall_{\mathtt{i}}y (\neg_{\mathtt{c}} \forall_{\mathtt{c}}x P(x) \imp_{\mathtt{i}} \neg_{\mathtt{c}} P(y))$ and $wRw$ hold. But $w \not\models_{M} \neg_{\mathtt{c}} \forall_{\mathtt{c}}x P(x) \imp_{\mathtt{i}} \forall_{\mathtt{i}}y \neg_{\mathtt{c}} P(y)$ holds. This is because $w \models_{M} \neg_{\mathtt{c}} \forall_{\mathtt{c}}x P(x)$ and $wRw$, but $w \not\models \forall_{\mathtt{i}}y \neg_{\mathtt{c}} P(y)$. This gives us the following proposition.

\begin{prop}
\label{prop:bre}
A formula $\forall_{\mathtt{i}}y (\neg_{\mathtt{c}} \forall_{\mathtt{c}}x P(x) \imp_{\mathtt{i}} \neg_{\mathtt{c}} P(y)) \imp_{\mathtt{i}} (\neg_{\mathtt{c}} \forall_{\mathtt{c}}x P(x) \imp_{\mathtt{i}} \forall_{\mathtt{i}}y  \neg_{\mathtt{c}} P(y))$ is not valid. 
\end{prop}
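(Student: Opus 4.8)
The plan is to reuse the two-world counter-model $M$ already exhibited above, namely $W=\{w,v\}$, $R=\{(w,w),(w,v),(v,v)\}$, $D(w)=D(v)=\{d\}$, with $d\notin V(P,w)$ and $d\in V(P,v)$ (and $V(c)=d$, though $c$ plays no role here), and to show that the displayed implication fails at $w$. By the satisfaction clause for $\imp_{\mathtt i}$ together with reflexivity of $R$ (take the accessible world to be $w$ itself), it suffices to prove two things: (i) $w\models_M\forall_{\mathtt i}y(\neg_{\mathtt c}\forall_{\mathtt c}xP(x)\imp_{\mathtt i}\neg_{\mathtt c}P(y))$, and (ii) $w\not\models_M\neg_{\mathtt c}\forall_{\mathtt c}xP(x)\imp_{\mathtt i}\forall_{\mathtt i}y\neg_{\mathtt c}P(y)$. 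Claim (ii), with $wRw$, then witnesses the failure of the outer intuitionistic implication and hence the non-validity of the formula.

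First I would record two auxiliary facts about $\forall_{\mathtt c}xP(x)$ that drive everything. Since $d\notin V(P,w)$ we have $w\not\models_M\forall_{\mathtt c}xP(x)$, hence $w\models_M\neg_{\mathtt c}\forall_{\mathtt c}xP(x)$ (as $w\not\models_M\bot$); and since $D(v)=\{d\}$ with $d\in V(P,v)$ we have $v\models_M\forall_{\mathtt c}xP(x)$, hence $v\not\models_M\neg_{\mathtt c}\forall_{\mathtt c}xP(x)$. Given the second fact, claim (ii) is immediate: from $wRv$, $v\models_M P(\underline d)$ and $v\not\models_M\bot$ we get $v\not\models_M\neg_{\mathtt c}P(\underline d)$, so $w\not\models_M\forall_{\mathtt i}y\neg_{\mathtt c}P(y)$; combined with $w\models_M\neg_{\mathtt c}\forall_{\mathtt c}xP(x)$ and $wRw$ this gives $w\not\models_M\neg_{\mathtt c}\forall_{\mathtt c}xP(x)\imp_{\mathtt i}\forall_{\mathtt i}y\neg_{\mathtt c}P(y)$, exactly as indicated in the text preceding the statement.

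The only part that needs a little care is claim (i): for every $u$ with $wRu$ and every $e\in D(u)$ one must have $u\models_M\neg_{\mathtt c}\forall_{\mathtt c}xP(x)\imp_{\mathtt i}\neg_{\mathtt c}P(\underline e)$, and since $D(u)=\{d\}$ this reduces to the two cases $u=w$ and $u=v$. For $u=v$ the implication holds vacuously, because the only world $R$-accessible from $v$ is $v$ itself and $v\not\models_M\neg_{\mathtt c}\forall_{\mathtt c}xP(x)$. For $u=w$ I would unfold the inner $\imp_{\mathtt i}$ over the accessible worlds $w$ and $v$: at $v$ the antecedent $\neg_{\mathtt c}\forall_{\mathtt c}xP(x)$ already fails, so that case is vacuous; at $w$, although $w\models_M\neg_{\mathtt c}\forall_{\mathtt c}xP(x)$, we also have $w\models_M\neg_{\mathtt c}P(\underline d)$ since $d\notin V(P,w)$ makes $w\not\models_M P(\underline d)$ and hence the classical negation holds vacuously. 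Thus $w$ satisfies the antecedent, establishing (i). There is no genuine obstacle here: the argument is a finite model-checking computation, and the only thing to stay attentive to is keeping the nested $\imp_{\mathtt i}$ and $\neg_{\mathtt c}$ evaluations straight across the two worlds, in particular noticing that $\neg_{\mathtt c}\forall_{\mathtt c}xP(x)$ is true at $w$ but false at $v$ while $\neg_{\mathtt c}P(\underline d)$ is true at $w$ and false at $v$.
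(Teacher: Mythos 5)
Your proposal is correct and follows essentially the same route as the paper: the same two-world model with $d\notin V(P,w)$, $d\in V(P,v)$, showing the antecedent $\forall_{\mathtt i}y(\neg_{\mathtt c}\forall_{\mathtt c}xP(x)\imp_{\mathtt i}\neg_{\mathtt c}P(y))$ holds at $w$ while the consequent fails there because $v\models_M P(\underline d)$. You merely spell out the case analysis for claim (i) that the paper leaves implicit.
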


\noindent Proposition \ref{prop:bre} implies an intuitionistic tautology $\forall_{\mathtt{i}}y(A \imp_{\mathtt{i}} B) \imp_{\mathtt{i}} (A \imp_{\mathtt{i}} \forall_{\mathtt{i}}y B)$, where $y$ does not occur free in $A$, is no longer valid, either.

\subsection{Sequent Calculus $\fljc$}
This section provides a sequent calculus $\fljc$, which is a first-order expansion of propositional $\ljc$ in~\cite{Toyooka2021a}. The calculus employs the ordinary notion of multi-succedent sequent. A {\em sequent} is a pair of finite multisets of formulas denoted by 
$\Gamma \Rightarrow \Delta$, which is read as ``if all formulas in $\Gamma$ hold then some formulas in $\Delta$ hold.'' 

The sequent calculus $\fljc$ consists of the axioms and the rules in Table \ref{fig:rules}. The notion of derivability is defined as an existence of a finite tree, which is called a {\em derivation}, generated by inference rules of Table \ref{fig:rules} from initial sequents $(Id)$ and $(\bot)$ of Table \ref{fig:rules}.

Our basic strategy of constructing $\fljc$ is to add classical implication and universal quantifier to the multi-succedent sequent calculus $\mathbf{mLJ}$, proposed by Maehara~\cite{Maehara1954}, 
where the right rules for intuitionistic implication and universal quantifier are of the following form:
\[
\infer
    {\Gamma \Rightarrow  A \imp_{\mathtt{i}} B}
    {A, \Gamma \Rightarrow B},
\qquad
\infer{\Gamma \Rightarrow \forall_{\mathtt{i}}x A}{\Gamma \Rightarrow A [z/x]},
\]
where $\Gamma$ is a finite multiset of intuitionistic formulas.
However, if the ordinary left and right rules for classical implication were added to $\mathbf{mLJ}$, the soundness of the resulting calculus would fail, because formulas $\neg_{\mathtt{c}} P(t) \imp_{\mathtt{c}} (\top \imp_{\mathtt{i}}  \neg_{\mathtt{c}} P(t))$ and $\forall_{\mathtt{i}}y (\neg_{\mathtt{c}} \forall_{\mathtt{c}}x P(x) \imp_{\mathtt{i}} \neg_{\mathtt{c}} P(y)) \imp_{\mathtt{i}} (\neg_{\mathtt{c}} \forall_{\mathtt{c}}x P(x) \imp_{\mathtt{i}} \forall_{\mathtt{i}}y  \neg_{\mathtt{c}} P(y))$ would be derivable, which are found invalid by Propositions \ref{prop:per2} and \ref{prop:bre}. This is the reason why the original right rules for intuitionistic implication and universal quantifier of $\mathbf{mLJ}$ described above are restricted to the right rules given in Table \ref{fig:rules}. Although the restriction should be imposed on a context $\Gamma$ of theses rules, no extra restriction is needed. Thus, the following applications of $(\Rightarrow \imp_{\mathtt{i}})$ and $(\Rightarrow \forall_{\mathtt{i}})$ are always legitimate:

\[
\infer[(\Rightarrow \imp_{\mathtt{i}})]{\Rightarrow A \imp_{\mathtt{i}} B}{A \Rightarrow B},
\qquad
\infer[(\Rightarrow \forall_{\mathtt{i}})]{\Rightarrow \forall_{\mathtt{i}}x A}{\Rightarrow A [z/x]}.
\]

Based on the abbreviation defined in Section \ref{subsec:syntax}, the following rules for the negations are obtained.

\[
\infer[(\Rightarrow \neg_{\mathtt{i}})]{\forall_{\mathtt{i}}x_{1} B_{1}, \ldots, \forall_{\mathtt{i}}x_{l} B_{l}, C_{1} \imp_{\mathtt{i}} D_{1}, \ldots, C_{m} \imp_{\mathtt{i}} D_{m}, p_{1}, \ldots p_{n} \Rightarrow  \neg_{\mathtt{i}} A}{A, \forall_{\mathtt{i}}x_{1} B_{1}, \ldots, \forall_{\mathtt{i}}x_{l} B_{l}, C_{1} \imp_{\mathtt{i}} D_{1}, \ldots, C_{m} \imp_{\mathtt{i}} D_{m}, p_{1}, \ldots p_{n} \Rightarrow}
\quad
\infer[(\neg_{\mathtt{i}} \Rightarrow)]
    {\neg_{\mathtt{i}} A, \Gamma \Rightarrow  \Delta}
    {\Gamma \Rightarrow \Delta, A}
\]
\[
\infer[(\Rightarrow \neg_{\mathtt{c}})]{\Gamma \Rightarrow \neg_{\mathtt{c}} A, \Delta}{\Gamma, A \Rightarrow \Delta}
\quad
\infer[(\neg_{\mathtt{c}} \Rightarrow)]{\neg_{\mathtt{c}} A, \Gamma \Rightarrow \Delta}{\Gamma \Rightarrow \Delta, A}.
\]

For example, the sequent $\top \imp_{\mathtt{i}} \forall_{\mathtt{c}}x P(x) \Rightarrow \forall_{\mathtt{c}}x (\top \imp_{\mathtt{i}} P(x))$ is derivable in $\fljc$ as follows:

\[
\infer[(\Rightarrow \forall_{\mathtt{c}})]{\top \imp_{\mathtt{i}} \forall_{\mathtt{c}}x P(x) \Rightarrow \forall_{\mathtt{c}}x (\top \imp_{\mathtt{i}} P(x))}{\infer[(\Rightarrow \imp_{\mathtt{i}})]{\top \imp_{\mathtt{i}} \forall_{\mathtt{c}}x P(x) \Rightarrow \top \imp_{\mathtt{i}} P(z)}{\infer[(\imp_{\mathtt{i}} \Rightarrow)]{\top \imp_{\mathtt{i}} \forall_{\mathtt{c}}x P(x), \top \Rightarrow P(z)}{{\top \Rightarrow \top} & {\infer[(\forall_{\mathtt{c}} \Rightarrow)]{\forall_{\mathtt{c}}x P(x) \Rightarrow P(z)}{P(z) \Rightarrow P(z)}}}}}.
\]

\if0
\noindent The following remark describes a relation between these rules and the restricted ones in $\fljc$.

\begin{rem}
If each formula in $\Gamma$ of the ordinary right rule for $\imp_{\mathtt{i}}$ is in $\mathsf{Form}_{\mathbf{J}}$, we can derive the ordinary right rule for $\imp_
{\mathtt{i}}$. For example, let $\Gamma$ be $P(t_{1}) \land Q(t_{2})$, $C \imp_{\mathtt{i}} D$. Then we recover the rule as follows: 
\[
\infer[(\land \Rightarrow, c \Rightarrow)]{P(t_{1}) \land Q(t_{2}), C \imp_{\mathtt{i}} D \Rightarrow A \imp_{\mathtt{i}} B}
{
\infer[(\Rightarrow \imp_{\mathtt{i}})]{P(t_{1}), Q(t_{2}), C \imp_{\mathtt{i}} D \Rightarrow A \imp_{\mathtt{i}} B}{
\infer[(Cut)]{A, P(t_{1}), Q(t_{2}), C \imp_{\mathtt{i}} D \Rightarrow  B}{
\infer[(\Rightarrow \land)]{P(t_{1}), Q(t_{2}) \Rightarrow P(t_{1}) \land Q(t_{2})}{{\infer[(w\Rightarrow)]{P(t_{1}), Q(t_{2}) \Rightarrow P(t_{1})}{P(t_{1}) \Rightarrow P(t_{1})}}&{\infer[(w\Rightarrow)]{P(t_{1}),Q(t_{2}) \Rightarrow Q(t_{2})}{Q(t_{2}) \Rightarrow Q(t_{2})}}}
&
A, P(t_{1}) \land Q(t_{2}), C \imp_{\mathtt{i}} D \Rightarrow  B
}
}
}.
\]
Just like the case of the intuitionistic implication, if each formula in $\Gamma$ of the ordinary right rule for $\forall_{\mathtt{i}}$ is in $\mathsf{Form}_{\mathbf{J}}$, we can derive the ordinary right rule for $\forall_{\mathtt{i}}$. For example, let $\Gamma$ be $\exists y P(y), \forall_{\mathtt{i}} y Q(y)$. Then we recover the original rule as follows:
\[
\infer[(\exists \Rightarrow)]{\exists_{y} P(y), \forall_{\mathtt{i}}y Q(y) \Rightarrow \forall_{\mathtt{i}}x A}{\infer[(\Rightarrow \forall_{\mathtt{i}})]{P(z'), \forall_{\mathtt{i}}y Q(y) \Rightarrow \forall_{\mathtt{i}}x A}{\infer[(Cut)]{P(z'), \forall_{\mathtt{i}}y Q(y) \Rightarrow A[z/x]}{{\infer[(\Rightarrow \exists)]{P(z') \Rightarrow \exists y P(y)}{P(z') \Rightarrow P(z')}} & {\exists y P(y), \forall_{\mathtt{i}} y Q(y) \Rightarrow A[z/x]}}}}.
\]

\end{rem}
\fi

\begin{table}[htbp]
    \caption{Sequent Calculus $\fljc$}
    \label{fig:rules}
\hrule
\textbf{Axioms}
\[
\infer[(Id)]{A \Rightarrow A}{}
~~~ \infer[(\bot)]{\bot \Rightarrow}{}
\]
\textbf{Structural Rules}
\[
\infer[(\Rightarrow w)]{\Gamma \Rightarrow  \Delta, A}{\Gamma \Rightarrow \Delta}~~~
\infer[(w \Rightarrow)]{A , \Gamma \Rightarrow \Delta}{\Gamma \Rightarrow \Delta}~~~
 \infer[(\Rightarrow c)]{\Gamma \Rightarrow \Delta, A}{\Gamma \Rightarrow \Delta,A,A}~~~
\infer[(c \Rightarrow)]{A , \Gamma \Rightarrow \Delta}{A,A, \Gamma \Rightarrow \Delta}
\]
\[
\infer[(Cut)]
    {\Gamma, \Pi \Rightarrow \Delta, \Sigma}
    {\Gamma \Rightarrow \Delta, A
    &
    A, \Pi \Rightarrow \Sigma}
\]
\textbf{Logical Rules}
\if0
\[
\infer[(\Rightarrow \lnot)]
    {\Gamma \Rightarrow  \lnot A}
    {A, \Gamma \Rightarrow }~~~
\infer[(\lnot \Rightarrow)]
    {\lnot A, \Gamma \Rightarrow }
    {\Gamma \Rightarrow  A}~~~
\]\fi
\[
\infer[(\Rightarrow \imp_{\mathtt{i}})]{\forall_{\mathtt{i}}x_{1} C_{1}, \ldots, \forall_{\mathtt{i}}x_{l} C_{l}, D_{1} \imp_{\mathtt{i}} E_{1}, \ldots, D_{m} \imp_{\mathtt{i}} E_{m}, p_{1}, \ldots p_{n} \Rightarrow A \imp_{\mathtt{i}} B}{A, \forall_{\mathtt{i}}x_{1} C_{1}, \ldots, \forall_{\mathtt{i}}x_{l} C_{l}, D_{1} \imp_{\mathtt{i}} E_{1}, \ldots, D_{m} \imp_{\mathtt{i}} E_{m}, p_{1}, \ldots p_{n} \Rightarrow B}
\]
\[
\infer[(\imp_{\mathtt{i}} \Rightarrow)]
    {A \imp_{\mathtt{i}} B, \Gamma_{1}, \Gamma_{2} \Rightarrow  \Delta_{1}, \Delta_{2}}
    {\Gamma_{1} \Rightarrow \Delta_{1}, A
    &
    B, \Gamma_{2} \Rightarrow \Delta_{2}}
\]
\[
\infer[(\Rightarrow \imp_{\mathtt{c}})]{\Gamma \Rightarrow \Delta, A \imp_{\mathtt{c}} B}{A, \Gamma \Rightarrow \Delta, B}~~~
\infer[(\imp_{\mathtt{c}} \Rightarrow)]
    {A \imp_{\mathtt{c}} B, \Gamma_{1}, \Gamma_{2} \Rightarrow  \Delta_{1}, \Delta_{2}}
    {\Gamma_{1} \Rightarrow \Delta_{1}, A
    &
    B, \Gamma_{2} \Rightarrow \Delta_{2}}
\]
\[ \infer[(\Rightarrow\land)]
    {\Gamma \Rightarrow \Delta, A \land B}
    {\Gamma \Rightarrow \Delta, A
    &
    \Gamma \Rightarrow \Delta, B}~~~
\infer[(\land\Rightarrow_1)]
    {A\land B, \Gamma \Rightarrow \Delta}
    {A, \Gamma  \Rightarrow \Delta}~~~
  \infer[(\land \Rightarrow_2)]
    {A\land B, \Gamma \Rightarrow \Delta}
    {B, \Gamma  \Rightarrow \Delta}
\]
\[
\infer[(\Rightarrow\lor_1)]{\Gamma \Rightarrow \Delta, A \lor B}{\Gamma \Rightarrow \Delta, A}~~~
\infer[(\Rightarrow\lor_2)]{\Gamma \Rightarrow \Delta, A \lor B}{\Gamma \Rightarrow \Delta, B}~~~
\infer[(\lor\Rightarrow)]
    {A \lor B, \Gamma \Rightarrow \Delta}
    {A, \Gamma \Rightarrow \Delta
    &
    B, \Gamma \Rightarrow \Delta}
\]
\[
\infer[(\Rightarrow \forall_{\mathtt{i}})^{\dag}]{\forall_{\mathtt{i}}x_{1} B_{1}, \ldots, \forall_{\mathtt{i}}x_{l} B_{l}, C_{1} \imp_{\mathtt{i}} D_{1}, \ldots, C_{m} \imp_{\mathtt{i}} D_{m}, p_{1}, \ldots p_{n} \Rightarrow \forall_{\mathtt{i}}x A}{\forall_{\mathtt{i}}x_{1} B_{1}, \ldots, \forall_{\mathtt{i}}x_{l} B_{l}, C_{1} \imp_{\mathtt{i}} D_{1}, \ldots, C_{m} \imp_{\mathtt{i}} D_{m}, p_{1}, \ldots p_{n} \Rightarrow A[z/x]}~~~
\infer[(\forall_{\mathtt{i}} \Rightarrow)]{\forall_{\mathtt{i}}x A, \Gamma \Rightarrow \Delta}{A[t/x], \Gamma \Rightarrow \Delta}
\]
\[
\infer[(\Rightarrow \forall_{\mathtt{c}})^{\dag}]{\Gamma \Rightarrow \Delta, \forall_{\mathtt{c}}x A}{\Gamma \Rightarrow \Delta, A[z/x]}~~~
\infer[(\forall_{\mathtt{c}} \Rightarrow)]{\forall_{\mathtt{c}}x A, \Gamma \Rightarrow \Delta}{A[t/x], \Gamma \Rightarrow \Delta}
\]
\[
\infer[(\Rightarrow \exists)]{\Gamma \Rightarrow \Delta, \exists x A}{\Gamma \Rightarrow \Delta, A[t/x]}~~~
\infer[(\exists \Rightarrow)^{\dag}]{\exists x A, \Gamma \Rightarrow \Delta}{A[z/x], \Gamma \Rightarrow \Delta}
\]
\begin{center}
    $\dag$: $z$ does not occur free in the lower sequent.
\end{center}
\hrule
\end{table}

As is noted in Section \ref{sec:int} for the propositional $\ljc$, although the context of the right rule for intuitionistic implication is restricted, all sequents derivable in propositional intuitionistic logic are also derivable. Similarly, all sequents derivable in first-order $\mathbf{mLJ}$ are also derivable in $\fljc$. The following proposition ensures this.

\begin{prop}
The ordinary right rules for intuitionistic implication and universal quantifier are derivable in $\fljc$, if the contexts of the rules contain only formulas in $\mathsf{Form}_{\mathbf{J}}$.
\end{prop}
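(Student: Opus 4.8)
The plan is to simulate each ordinary right rule by the corresponding \emph{restricted} rule of Table~\ref{fig:rules}, after first ``decomposing'' the context down to persistent formulas using $(Cut)$ together with the left rules for $\bot,\land,\lor,\exists$. I will spell out the case of intuitionistic implication; the case of $\forall_{\mathtt{i}}$ is completely parallel, replacing $A\imp_{\mathtt{i}}B$ by $\forall_{\mathtt{i}}xA$, the premise $\Gamma,A\Rightarrow B$ by $\Gamma\Rightarrow A[z/x]$ with $z$ fresh, and $(\Rightarrow\imp_{\mathtt{i}})$ by $(\Rightarrow\forall_{\mathtt{i}})$. Call a formula \emph{persistent} if it is atomic, of the form $D\imp_{\mathtt{i}}E$, or of the form $\forall_{\mathtt{i}}xC$ --- exactly the shapes permitted in the context of the restricted rules. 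For a multiset $\Gamma$ of $\mathsf{Form}_{\mathbf{J}}$-formulas, let $n(\Gamma)$ count the occurrences of $\bot,\land,\lor,\exists$ in $\Gamma$ that lie within the scope of no $\imp_{\mathtt{i}}$ and of no $\forall_{\mathtt{i}}$. Since the main connective of an intuitionistic formula either makes it persistent (atomic, $\imp_{\mathtt{i}}$, $\forall_{\mathtt{i}}$) or is one of $\bot,\land,\lor,\exists$, one checks that $n(\Gamma)=0$ holds precisely when every formula of $\Gamma$ is persistent.

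I would then prove, by induction on $n(\Gamma)$: if $\Gamma,A\Rightarrow B$ is $\fljc$-derivable and every formula of $\Gamma$ is in $\mathsf{Form}_{\mathbf{J}}$, then $\Gamma\Rightarrow A\imp_{\mathtt{i}}B$ is $\fljc$-derivable. If $n(\Gamma)=0$, then $(\Rightarrow\imp_{\mathtt{i}})$ applies directly to the premise. If $n(\Gamma)>0$, pick $G\in\Gamma$ with $n(G)>0$ --- so the main connective of $G$ is $\bot,\land,\lor$ or $\exists$ --- and write $\Gamma=G,\Gamma'$. For $G=\bot$, the goal follows from $(\bot)$ by weakenings. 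For $G=G_1\land G_2$, cut the trivially derivable $G_1,G_2\Rightarrow G_1\land G_2$ into the premise to obtain $G_1,G_2,\Gamma',A\Rightarrow B$; here $n(G_1,G_2,\Gamma')=n(\Gamma)-1$, so the induction hypothesis gives $G_1,G_2,\Gamma'\Rightarrow A\imp_{\mathtt{i}}B$, from which $(\land\Rightarrow_1)$, $(\land\Rightarrow_2)$ and $(c\Rightarrow)$ recover $G_1\land G_2,\Gamma'\Rightarrow A\imp_{\mathtt{i}}B$. For $G=G_1\lor G_2$, cut $G_i\Rightarrow G_1\lor G_2$ into the premise to obtain $G_i,\Gamma',A\Rightarrow B$ for $i=1,2$, apply the induction hypothesis (each $n(G_i,\Gamma')<n(\Gamma)$), and close with $(\lor\Rightarrow)$. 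For $G=\exists xG_1$, choose $z$ fresh, cut $G_1[z/x]\Rightarrow\exists xG_1$ into the premise to obtain $G_1[z/x],\Gamma',A\Rightarrow B$ --- where $n(G_1[z/x],\Gamma')=n(\Gamma)-1$ because a variable-for-variable substitution introduces no connective --- apply the induction hypothesis, and close with $(\exists\Rightarrow)$, whose eigenvariable condition holds by the choice of $z$. In every case $\Gamma\Rightarrow A\imp_{\mathtt{i}}B$ is derived, and the auxiliary sequents $G_1,G_2\Rightarrow G_1\land G_2$, $G_i\Rightarrow G_1\lor G_2$, $G_1[z/x]\Rightarrow\exists xG_1$ are immediate from $(Id)$ plus one or two logical rules.

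There is no conceptual obstacle here; the work is the bookkeeping. One must verify that $n$ is a genuine termination measure that strictly decreases in each nontrivial case --- in particular across the branching in the $\lor$-case and across the substitution in the $\exists$-case --- and must keep the eigenvariables straight: in the $\exists$-case, and, for the $\forall_{\mathtt{i}}$-version of the statement, also ensuring that the eigenvariable $z$ inherited from the ordinary rule remains distinct from those introduced during decomposition. Finally, the argument uses $(Cut)$; this is legitimate since $(Cut)$ is a primitive rule of $\fljc$, so the cut-elimination theorem of Section~\ref{sec:cut} is not needed for this proposition.
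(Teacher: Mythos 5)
Your proof is correct and follows essentially the same route the paper intends: decompose each non-persistent intuitionistic context formula (necessarily headed by $\bot,\land,\lor,\exists$) into persistent components via $(Cut)$ with trivially derivable sequents, apply the restricted rule, and restore the original context with the left rules and contraction. The induction measure and the eigenvariable bookkeeping you flag are exactly the details that need checking, and they go through as you describe.
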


\noindent The ordinary right rules for intuitionistic implication and universal quantifier are derivable also in the calculus obtained from $\mathbf{mLJ}$ by replacing them with the restricted ones. This means the restricted version of the rules denotes the core of the ordinary ones.

We proceed to the soundness theorem. We use $\Gamma \models \Delta$ to mean: for some formula $C \in \Delta$, $\Gamma \models C$ holds. 

\begin{definition}
We define {\em persistent} formulas inductively as follows:
\[
E ::= \, P(t_{1}, \ldots, t_{m}) \,|\, A \imp_{\mathtt{i}} A \,|\, \forall_{\mathtt{i}}x A,
\]
where $P(t_{1}, \ldots, t_{m}) \in \mathsf{Form}$ and $A  \in \mathsf{Form}$.
\end{definition}
Formulas occurring at a context of the right rules for the intuitionistic implication and universal quantifier must be persistent. In what follows, we use $\Theta$ to denote a multiset of persistent formulas. 
This is just for making the notation simpler. By this notation, the right rules for the intuitionistic implication and the intuitionistic universal quantifier are described as follows:

\[
\infer[(\Rightarrow \imp_{\mathtt{i}})]{\Theta \Rightarrow A \imp_{\mathtt{i}} B}{A, \Theta \Rightarrow B},
\quad
\infer[(\Rightarrow \forall_{\mathtt{i}})]{\Theta \Rightarrow \forall_{i}x A}{\Theta \Rightarrow A[z/x]}.
\]

\noindent Persistent formulas satisfy heredity, which is trivial from Fact \ref{fact:her}.

\begin{thm}
\label{thm:fcs}
If $\Gamma \Rightarrow \Delta$ is derivable in $\fljc$, then $\Gamma \models \Delta$ holds.  
\end{thm}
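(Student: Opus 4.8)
The plan is to proceed by induction on the height of the derivation of $\Gamma \Rightarrow \Delta$ in $\fljc$, showing at each step that $\Gamma \models \Delta$ in the sense of Definition~\ref{def:sec} (i.e., $\Gamma \models C$ for some $C \in \Delta$, after instantiating all free variables by an arbitrary assignment $d$ into $D(w)$). The base cases $(Id)$ and $(\bot)$ are immediate from Definition~\ref{def:model6}. For the structural rules and for the cut rule, the argument is routine: weakening and contraction are trivial on the semantic side, and cut follows because if $w$ satisfies all of $\Gamma,\Pi$ then either $w$ satisfies some formula of $\Delta$ (done) or $w \models A$, in which case the second premise gives a formula of $\Sigma$ satisfied at $w$. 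The propositional logical rules for $\land$, $\lor$, and $\imp_{\mathtt{c}}$, together with $(\forall_{\mathtt{c}} \Rightarrow)$, $(\forall_{\mathtt{i}} \Rightarrow)$, $(\Rightarrow \exists)$, and the left rule $(\imp_{\mathtt{i}} \Rightarrow)$, are verified directly from the corresponding clauses of the satisfaction relation; for the left rules involving quantifiers one uses the rigidity of constants and the fact that $V(t) \in D(w)$ for every closed term $t$ over $\mathcal{L}(D(w))$, so that the instance $A[t/x]$ is a legitimate witness.

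The cases that require genuine care are the right rules with a side condition or a restricted context: $(\Rightarrow \forall_{\mathtt{c}})$, $(\Rightarrow \exists)$'s companion $(\exists \Rightarrow)$, $(\Rightarrow \forall_{\mathtt{i}})$, and $(\Rightarrow \imp_{\mathtt{i}})$. For the eigenvariable rules $(\Rightarrow \forall_{\mathtt{c}})$ and $(\exists \Rightarrow)$ the standard move applies: since $z$ does not occur free in the conclusion, one feeds the premise the assignment $d' = d[z \mapsto e]$ for an arbitrary $e \in D(w)$; the hypotheses on $\Gamma$ are unaffected because $z \notin \mathsf{FV}(\Gamma)$, and one concludes $w \models A[\underline e/x]$ for every $e$, which is exactly $w \models \forall_{\mathtt{c}} x A$ (respectively, discharges the existential witness).

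The heart of the proof is $(\Rightarrow \imp_{\mathtt{i}})$ and $(\Rightarrow \forall_{\mathtt{i}})$, and this is where the restriction of the context to persistent formulas $\Theta$ is essential — without it the statement would be false, by Propositions~\ref{prop:per2} and~\ref{prop:bre}. For $(\Rightarrow \imp_{\mathtt{i}})$: to show $w \models \Theta$ implies $w \models A \imp_{\mathtt{i}} B$, take any $v$ with $wRv$ and $v \models A$; since every formula in $\Theta$ is persistent it satisfies heredity (noted just before the theorem, via Fact~\ref{fact:her}), so from $w \models \Theta$ we get $v \models \Theta$; the domain assignment $d$ into $D(w)$ is also an assignment into $D(v) \supseteq D(w)$, so the induction hypothesis applied at $v$ to the premise $A, \Theta \Rightarrow B$ yields $v \models B$; hence $w \models A \imp_{\mathtt{i}} B$. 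The case $(\Rightarrow \forall_{\mathtt{i}})$ combines this heredity-transfer argument with the eigenvariable manipulation: for $wRv$ and $e \in D(v)$, use $v \models \Theta$ by heredity and apply the induction hypothesis to $\Theta \Rightarrow A[z/x]$ at $v$ under the assignment sending $z$ to $e$ (legitimate since $z \notin \mathsf{FV}(\Theta)$ by the side condition), obtaining $v \models A[\underline e/x]$. I expect this interaction of heredity with the domain-inclusion condition $D(w) \subseteq D(v)$ to be the main obstacle to state cleanly; everything else is bookkeeping over Definition~\ref{def:model6}.
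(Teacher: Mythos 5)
Your proposal is correct and follows essentially the same route as the paper's proof: a standard induction on the derivation in which every rule except $(\Rightarrow \imp_{\mathtt{i}})$ and $(\Rightarrow \forall_{\mathtt{i}})$ is routine, and those two are treated exactly as you describe, by transferring the persistent context $\Theta$ along $R$ via heredity and, for $(\Rightarrow \forall_{\mathtt{i}})$, reinterpreting the assignment modified at the eigenvariable $z$ as an assignment into $D(v) \supseteq D(w)$.
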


\begin{proof}
It can be shown straightforwardly that every axiom and rule in $\fljc$ except for the rules $(\Rightarrow \imp_{\mathtt{i}})$ and $(\Rightarrow \forall_{\mathtt{i}})$ preserves validity. 
Only the cases of $(\Rightarrow \imp_{\mathtt{i}})$ and $(\Rightarrow \forall_{\mathtt{i}})$ are considered here. 
\begin{description}
    \item[$(\Rightarrow \imp_{\mathtt{i}})$] Let $x_{1}, \ldots, x_{m}$ be free variables of a formula in $\Theta$ and $z_{1}, \ldots, z_{n}$ be free variables in\linebreak $A \imp_{\mathtt{i}} B$. Suppose $w \models_{M} \left(\bigwedge \Theta \right)[\underline{d(x_{1})}/x_{1}] \cdots [\underline{d(x_{m})}/x_{m}]$. 
We have to show\linebreak $w \models_{M} (A \imp_{\mathtt{i}} B)[\underline{d(z_{1})}/z_{1}] \cdots [\underline{d(z_{n})}/z_{n}]$, i.e., for all $v \in W, (wRv \text{ and }\linebreak v \models_{M} A[\underline{d(z_{1})}/z_{1}] \cdots [\underline{d(z_{n})}/z_{n}]$ jointly imply $v \models_{M} B[\underline{d(z_{1})}/z_{1}] \cdots [\underline{d(z_{n})}/z_{n}])$. Fix any $v$ which satisfies $wRv$ and $v \models_{M} A[\underline{d(z_{1})}/z_{1}] \cdots [\underline{d(z_{n})}/z_{n}]$. 
Then, since all of the formulas in $\Theta$ satisfy heredity, we obtain $v \models_{M} \left(\bigwedge \Theta \right)[\underline{d(x_{1})}/x_{1}] \cdots [\underline{d(x_{m})}/x_{m}]$. 
By the validity of the premise of $(\Rightarrow \imp_{\mathtt{i}})$, we obtain $v \models_{M} B[\underline{d(z_{1})}/z_{1}] \cdots [\underline{d(z_{n})}/z_{n}]$, as required. 

\item[$(\Rightarrow \forall_{\mathtt{i}})$] Let $x_{1}, \ldots, x_{m}$ be free variables of a formula in $\Theta$ and $z_{1}, \ldots, z_{n}$ be free variables in $\forall_{\mathtt{i}}x A$. 
Suppose $w \models_{M} \left(\bigwedge \Theta \right)[\underline{d(x_{1})}/x_{1}] \cdots [\underline{d(x_{m})}/x_{m}]$.
We have to show $w \models_{M} (\forall_{\mathtt{i}}x A)[\underline{d(z_{1})}/z_{1}] \cdots [\underline{d(z_{n})}/z_{n}]$, i.e., for all $v \in W$, ($wRv$ implies for all $d \in D(v)$, $v \models_{M} A[\underline{d(z_{1})}/z_{1}] \cdots [\underline{d(z_{n})}/z_{n}][\underline{d}/x]$). Fix any $v$ which satisfies $wRv$ and any $d \in D(v)$. Consider an assignment $d(z|d)$ satisfying the following: $d(z|d)(y)$ = $d(y)$ if $y \not\equiv z$ and $d(z|d)(y)$ = $d$ if $y \equiv z$. Since $D(v)$ contains all the elements of $D(w)$, the assignment $d(z|d)$ can be regarded as an assignment: $\mathsf{Var} \to D(v)$. By the validity of the premise of $(\Rightarrow \forall_{\mathtt{i}})$, we obtain the following: if $v \models_{M} \left(\bigwedge \Theta \right)[\underline{d(z|d)(x_{1})}/x_{1}] \cdots [\underline{d(z|d)(x_{m})}/x_{m}]$ hold, then $v \models_{M} A[z/x][\underline{d(z|d)(z_{1})}/z_{1}] \cdots [\underline{d(z|d)(z_{n})}/z_{n}]$ hold. Then, since all formulas in $\Theta$ satisfy heredity, $v \models_{M} \left(\bigwedge \Theta \right)[\underline{d(x_{1})}/x_{1}] \cdots [\underline{d(x_{m})}/x_{m}]$ holds. Since $z$ does not occur free in any formula $E \in \Theta$, $v \models_{M} A[z/x][\underline{d(z|d)(z_{1})}/z_{1}] \cdots [\underline{d(z|d)(z_{n})}/z_{n}]$ is obtained. By this, we can obtain $v \models_{M} A[\underline{d(z_{1})}/z_{1}] \cdots [\underline{d(z_{n})}/z_{n}][\underline{d}/x]$, as required.
\qedhere 
\end{description}
\end{proof}

\section{Cut Elimination}
\label{sec:cut}
In this section, it is supposed that the sets of bound and free variables are disjoint, as in~\cite{Kashima2009}. Let us denote by $\fljcc$ a sequent calculus obtained from $\fljc$ by removing the rule $(Cut)$. In~\cite{Toyooka2021a}, with the help of a variant of ``Mix rule'' by Gentzen (``{\em extended cut rule}'' used in~\cite{Kashima2009,Ono2019,Ono1985}) to take care of contraction rules, cut elimination theorem was already shown for propositional logic calculus $\ljc$. ``{\em Extended cut rule}''$(Ecut)$ is described as follows:

\[
\infer[(Ecut)]{\Gamma, \Pi \Rightarrow \Delta, \Sigma}{\Gamma \Rightarrow \Delta, A^{m} & A^{n}, \Pi \Rightarrow \Sigma},
\]
where $m,n \geq 0$ and $A^{l}$ means $\underbrace{A, \ldots, A}_{l \text{ times occurrences}}$, i.e., $l$ times repetition of $A$. 

In order to show the cut elimination theorem, it suffices to deal with a derivation of the ``{\em $(Ecut)$-bottom form}'' defined in Definition \ref{def:ecr}.

 \begin{definition}
\label{def:ecr}
A derivation $\mathcal{D}$ in $\fljc$ is of the {\em $(Ecut)$-bottom form} if it has the following form
\[
\infer[(Ecut)]{\Gamma, \Pi \Rightarrow \Delta, \Sigma}{{\infer[rule (\mathcal{D}_{1})]{\Gamma \Rightarrow \Delta, A^{m}}{\vdots & \mathcal{D}_{1}}}&{\infer[rule(\mathcal{D}_{2})]{A^{n}, \Pi \Rightarrow \Sigma}{\vdots & \mathcal{D}_{2}}}},
\]
where $rule(\mathcal{D}_{i})$ is the last applied rule of a given derivation $\mathcal{D}_{i}$, and there is no application of $(Ecut)$ in $\mathcal{D}_{1}$ nor $\mathcal{D}_{2}$. Let the {\em weight} $w$ of an Ecut-bottom form be the sum of the number of sequents occurring in $\mathcal{D}_{1}$ and $\mathcal{D}_{2}$. Let the {\em complexity} $c$ of an Ecut-bottom form be the number of logical symbols ($\imp_{\mathtt{i}}, \imp_{\mathtt{c}}, \land, \lor, \forall_{\mathtt{i}}, \forall_{\mathtt{c}}$, and $\exists$) appearing in the {\em Ecut formula}. It should be noted that this means substitution does not change the complexity.  For example, $A$ and $A[t/x]$ has the same complexity. We note that $w \geq 2$ and $c \geq 0$.
\end{definition}

\begin{definition}
\label{def:pri}
We define a {\em principal formula} of a logical rule as follows.
\begin{itemize}
    \item For every logical rule except for $(\Rightarrow \imp_{\mathtt{i}})$ and $(\Rightarrow \forall_{\mathtt{i}})$, a {\em principal formula} of the rule is the unique compound formula in the lower sequent of the rule which is produced by an application of the logical rule. 
    \item For $(\Rightarrow \imp_{\mathtt{i}})$ and $(\Rightarrow \forall_{\mathtt{i}})$, every formula occurring in the lower sequent of the rules is {\em principal}. 
\end{itemize}
\end{definition}

Based on Definitions \ref{def:ecr} and \ref{def:pri}, we can show Lemma \ref{lem:cut}, which is the core of showing the cut elimination theorem. In order to show this lemma, Lemma \ref{lem:subst}, which is related to substitution, is needed. Recall that $[t/x]$ is {\em clash avoiding substitution} of $t$ for $x$.

\begin{lem}
\label{lem:subst}
If there is a derivation $\mathcal{D}$ of $\Gamma \Rightarrow \Delta$, then there is also a derivation $\mathcal{D}'$ of  $\Gamma[t/x] \Rightarrow \Delta[t/x]$ whose weight is the same as that of $\mathcal{D}$.
\end{lem}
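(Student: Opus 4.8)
The plan is to prove Lemma~\ref{lem:subst} by induction on the structure of the derivation $\mathcal{D}$, transforming it into a derivation $\mathcal{D}'$ of $\Gamma[t/x] \Rightarrow \Delta[t/x]$ rule by rule, from the leaves down. Since the weight is defined as the number of sequents occurring in a derivation, and the transformation I have in mind replaces each sequent by its substituted version without adding or deleting any sequent (save for the need to first rename bound variables, which does not affect the tree shape), the weight will be preserved automatically; so the real content is showing that each inference rule of $\fljc$ is stable under the substitution $[t/x]$.

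First I would fix the ambient convention from the start of Section~\ref{sec:cut} that bound and free variables are disjoint; in particular $x$ and $t$ are free variables, so $[t/x]$ never touches a bound occurrence, and clash-avoidance is automatic for the quantifier-free rules. For the base cases, $(Id)$ and $(\bot)$ are clearly closed under substitution: $(A \Rightarrow A)[t/x]$ is $A[t/x] \Rightarrow A[t/x]$, again an instance of $(Id)$, and $(\bot \Rightarrow)[t/x]$ is $\bot \Rightarrow$. For the inductive step, I would go through the rules in Table~\ref{fig:rules}. The propositional logical rules and the structural rules all commute with $[t/x]$ because substitution distributes over $\land,\lor,\imp_{\mathtt{i}},\imp_{\mathtt{c}}$ and over multiset union; one only has to check that a context of persistent formulas stays a context of persistent formulas, which holds since $P(\vec s)[t/x]$, $(A \imp_{\mathtt{i}} B)[t/x]$ and $(\forall_{\mathtt{i}} y\, A)[t/x]$ remain of the respective persistent shapes. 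For the quantifier rules with a term parameter, $(\forall_{\mathtt{i}} \Rightarrow)$, $(\forall_{\mathtt{c}} \Rightarrow)$ and $(\Rightarrow \exists)$, I would use the substitution-composition identity $A[s/y][t/x] \equiv A[t/x][s[t/x]/y]$ (valid because $y$ is bound, hence $\neq x$ and not free in $t$), so the substituted premise is again a legitimate instance of the same rule with the new term $s[t/x]$.

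The main obstacle is the eigenvariable rules $(\Rightarrow \forall_{\mathtt{i}})$, $(\Rightarrow \forall_{\mathtt{c}})$ and $(\exists \Rightarrow)$, whose side condition requires the eigenvariable $z$ not to occur free in the lower sequent. If $z$ happens to equal $x$, or if $z$ occurs in $t$, then applying $[t/x]$ to the subderivation above the rule would either clash with the eigenvariable or violate the freshness condition downstream. The standard remedy, which I would invoke here, is to first rename: by the induction hypothesis applied with a renaming substitution $[z'/z]$ for a completely fresh $z'$ (this is itself an instance of the lemma, with a weight-preserving renaming, so no circularity arises if one either takes renaming as a primitive admissible operation or sets up the induction to handle renamings first), replace the eigenvariable $z$ throughout the upper subderivation by a fresh $z'$ distinct from $x$ and not occurring in $t$; then push $[t/x]$ through, and the rule application with eigenvariable $z'$ still satisfies its side condition relative to the substituted lower sequent. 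Since renaming and substitution each leave the number of sequents unchanged, the weight of $\mathcal{D}'$ equals that of $\mathcal{D}$, completing the induction.

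A small additional point worth stating explicitly: because the complexity of a formula is unchanged by substitution (as already noted in Definition~\ref{def:ecr}), this lemma is exactly what is needed later to keep the complexity measure of an $(Ecut)$-bottom form under control when the cut formula is a quantified formula and one needs to instantiate an eigenvariable; the weight-preservation clause is what guarantees the induction in Lemma~\ref{lem:cut} still terminates after such an instantiation step.
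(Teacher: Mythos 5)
Your proposal is correct and follows essentially the same route the paper takes: the paper dispatches this lemma with a one-line appeal to induction on the weight of the derivation in the style of Negri--von Plato and Troelstra--Schwichtenberg, which is precisely the rule-by-rule commutation argument you spell out, including the standard eigenvariable renaming for $(\Rightarrow \forall_{\mathtt{i}})$, $(\Rightarrow \forall_{\mathtt{c}})$ and $(\exists \Rightarrow)$. The renaming step causes no circularity since it is the induction hypothesis applied (with the substitution $[z'/z]$) to a premise of strictly smaller weight, so your argument is complete as stated.
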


\noindent Lemma \ref{lem:subst} can be shown by induction on the weight of a derivation, as was done in~\cite{Negri2001,Troelstra2012}. 

\begin{lem}
\label{lem:cut}
For every derivation of the {$(Ecut)$-bottom form}, there is an {$(Ecut)$-free derivation} with the same conclusion.
\end{lem}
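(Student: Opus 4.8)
\textbf{Proof plan for Lemma \ref{lem:cut}.}

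The plan is to argue by double induction on the pair $(c, w)$ ordered lexicographically, where $c$ is the complexity of the Ecut formula and $w$ is the weight of the $(Ecut)$-bottom form, exactly in the style of Gentzen's Hauptsatz as carried out in~\cite{Kashima2009,Ono1985,Toyooka2021a}. Given an $(Ecut)$-bottom form with premises $\Gamma \Rightarrow \Delta, A^{m}$ (derived by $rule(\mathcal{D}_{1})$) and $A^{n}, \Pi \Rightarrow \Sigma$ (derived by $rule(\mathcal{D}_{2})$), I would split into the usual families of cases. First, if $m = 0$ or $n = 0$, the Ecut is degenerate: the conclusion follows from one premise by weakenings, so no appeal to the induction hypothesis on $c$ is needed. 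Next, if $rule(\mathcal{D}_{1})$ or $rule(\mathcal{D}_{2})$ is an initial sequent $(Id)$ or $(\bot)$, the conclusion is obtained directly or by weakening. Then come the two substantive groups: the case where the last inference on one side is not active on (an occurrence of) $A$ — i.e., $A$ is a side formula, or $A$ is contracted, weakened, or otherwise not principal — and the case where $A$ is principal on both sides simultaneously.

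For the non-principal cases, I would permute the $(Ecut)$ upward past $rule(\mathcal{D}_{1})$ or $rule(\mathcal{D}_{2})$, which strictly decreases $w$ while keeping $c$ fixed, so the inner induction hypothesis applies; one then reassembles the derivation by reapplying the permuted rule (and, where a rule such as $(\imp_{\mathtt{i}}\Rightarrow)$, $(\imp_{\mathtt{c}}\Rightarrow)$, or $(Cut)$ itself splits the context, one may need two appeals to the induction hypothesis, as in the standard Mix-rule treatment). The genuinely delicate part here is the interaction of $(Ecut)$ with the restricted right rules $(\Rightarrow\imp_{\mathtt{i}})$ and $(\Rightarrow\forall_{\mathtt{i}})$: since \emph{every} formula in the lower sequent of those rules is declared principal (Definition \ref{def:pri}), one cannot naively permute an $(Ecut)$ whose cut formula sits in the context $\Theta$ of such a rule — but this is exactly the point of requiring $\Theta$ to consist of persistent formulas, so when $A \in \Theta$ the cut formula $A$ must itself be persistent, i.e. of the form $P(\vec t)$, $C \imp_{\mathtt{i}} D$, or $\forall_{\mathtt{i}}x C$, and one can then push the cut into the premise and re-apply $(\Rightarrow\imp_{\mathtt{i}})$ or $(\Rightarrow\forall_{\mathtt{i}})$ because the resulting context is still a multiset of persistent formulas. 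When the last rule on the left is itself $(\Rightarrow\imp_{\mathtt{i}})$ or $(\Rightarrow\forall_{\mathtt{i}})$ and $A$ is the \emph{displayed} implication/quantified formula, that is handled in the principal/principal group below.

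For the principal/principal cases, I would do the usual key reductions, replacing a cut on a compound formula by one or two cuts on its immediate subformulas, each of strictly smaller complexity $c$, so the outer induction hypothesis applies (the weight is irrelevant once $c$ drops). The cases for $\land$, $\lor$, $\imp_{\mathtt{c}}$, $\forall_{\mathtt{c}}$, $\exists$ are the textbook ones; for $\forall_{\mathtt{c}}$ and $\exists$ one first uses Lemma \ref{lem:subst} to instantiate the eigenvariable $z$ by the witness term $t$ without changing weight, then cuts on $A[t/x]$, which has the same complexity as $A$ by the remark in Definition \ref{def:ecr} and strictly smaller complexity than $\forall_{\mathtt{c}}x A$ or $\exists x A$. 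The only new cases relative to~\cite{Toyooka2021a} are $\forall_{\mathtt{i}}$ and $\imp_{\mathtt{i}}$: when $A \equiv C \imp_{\mathtt{i}} D$ is principal on both sides — introduced on the left by the restricted $(\Rightarrow\imp_{\mathtt{i}})$ from $C, \Theta \Rightarrow D$ and eliminated on the right by $(\imp_{\mathtt{i}}\Rightarrow)$ from $\Pi_1 \Rightarrow \Sigma_1, C$ and $D, \Pi_2 \Rightarrow \Sigma_2$ — I would first cut $D$ out (smaller complexity), then cut $C$ out, rebuilding the contraction of the $m$, $n$ copies via the extended-cut formulation exactly as Gentzen's Mix handles it; and when $A \equiv \forall_{\mathtt{i}}x C$ is principal on both sides, introduced on the left by $(\Rightarrow\forall_{\mathtt{i}})$ from $\Theta \Rightarrow C[z/x]$ and eliminated on the right by $(\forall_{\mathtt{i}}\Rightarrow)$ using $C[t/x]$, I apply Lemma \ref{lem:subst} to the left premise to get $\Theta \Rightarrow C[t/x]$ (weight preserved, and $z \notin \mathsf{FV}(\Theta)$ so $\Theta[t/z] = \Theta$) and then cut on $C[t/x]$, of smaller complexity. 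The main obstacle throughout is bookkeeping: making sure that every context produced on the left branch after a permutation or key reduction is still a legal context for $(\Rightarrow\imp_{\mathtt{i}})$/$(\Rightarrow\forall_{\mathtt{i}})$ — i.e. still consists solely of persistent formulas — and that eigenvariable side conditions are respected after the substitutions supplied by Lemma \ref{lem:subst}; both are guaranteed by the design of the calculus, but they are the places where a careless permutation would break the proof.
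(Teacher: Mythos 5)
Your plan is correct and follows essentially the same route as the paper: a lexicographic double induction on (complexity, weight), the degenerate $m=0$/$n=0$ and initial-sequent cases, permutation for non-principal occurrences, key reductions for principal/principal pairs, and—crucially—the observation that when the Ecut formula sits in the context $\Theta$ of a restricted rule $(\Rightarrow\imp_{\mathtt{i}})$ or $(\Rightarrow\forall_{\mathtt{i}})$ it is persistent, so the cut can be pushed into the premise (renaming the eigenvariable via Lemma \ref{lem:subst} where needed) and the restricted rule re-applied over a still-persistent context. This last point is exactly the one case the paper works out in detail, so no gap.
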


\begin{proof}
 By double induction on the complexity and the weight lexicographically. For the case of $m = 0$ or $n = 0$, $(Ecut)$ can be eliminated by applying $(\Rightarrow w)$ or $(w \Rightarrow)$. Thus, we assume $m > 0$ and $n > 0$ in what follows. For the other cases, our argument is divided into the following four cases:

\begin{enumerate}
    \item $\mathcal{D}_{i}$ is an initial sequent,
    \item $rule(\mathcal{D}_{i})$ is a structural rule,
    \item $rule(\mathcal{D}_{i})$ is a logical rule where the {\em Ecut formula} is not principal,
    \item $rule(\mathcal{D}_{1})$ and $rule(\mathcal{D}_{2})$ are logical rules, and the {\em Ecut formulas} are principal in both rules.
\end{enumerate}

\noindent The definition of a principal formula (Definition \ref{def:pri}) enables cases categorized into $(3)$ to be treated with little difficulty. Here, we only shows the case when $rule(\mathcal{D}_{1})$ is $(\Rightarrow \imp_{\mathtt{i}})$ and $rule(\mathcal{D}_{2})$ is $(\Rightarrow \forall_{\mathtt{i}})$, which is categorized into $(4)$. In this case, a derivation of the $(Ecut)$-bottom form is described as follows:
\[
\infer[(Ecut)]{\Theta_{1}, \Theta_{2} \Rightarrow \forall_{\mathtt{i}} x A}{{\infer[(\Rightarrow \imp_{\mathtt{i}})]{\Theta_{1} \Rightarrow B \imp_{\mathtt{i}} C}{\infer{B, \Theta_{1} \Rightarrow C}{\vdots &  \mathcal{D}_{1}}}} & {\infer[(\Rightarrow \forall_{\mathtt{i}})]{(B \imp_{\mathtt{i}} C)^{n}, \Theta_{2} \Rightarrow \forall_{\mathtt{i}}x A}{\infer{(B \imp_{\mathtt{i}} C)^{n}, \Theta_{2} \Rightarrow A[z/x]}{\vdots &  \mathcal{D}_{2}}}}},
\]
where $\Theta_{1}$ and $\Theta_{2}$ are multisets of persistent formulas, and $z$ does not occur free in the sequent $(B \imp_{\mathtt{i}} C)^{n}, \Theta_{2} \Rightarrow \forall_{\mathtt{i}}x A$. Since the sets of bound and free variables are disjoint, $z$ does not occur in $(B \imp_{\mathtt{i}} C)^{n}, \Theta_{2} \Rightarrow \forall_{\mathtt{i}}x A$. However, $z$ can occur in $\Theta_{1}$. Suppose $y$ does not occur free in $\Theta_{1}$, $B \imp_{\mathtt{i}} C$, $\Theta_{2}$, or $\forall_{\mathtt{i}}x A$, which implies $y$ does not occur in these formulas. Then, by Lemma \ref{lem:subst}, we can obtain the sequent $((B \imp_{\mathtt{i}} C)[y/z])^{n}, \Theta_{2}[y/z] \Rightarrow A[z/x][y/z]$. Since $z$ does not occur in $B \imp_{\mathtt{i}} C$, $\Theta_{2}$, or $\forall_{\mathtt{i}}x A$, a sequent $(B \imp_{\mathtt{i}} C)^{n}, \Theta_{2} \Rightarrow A[y/x]$ is obtained. By using this sequent, we can obtain the following derivation:
\[
\infer[(\Rightarrow \forall_{\mathtt{i}})]{\Theta_{1}, \Theta_{2} \Rightarrow \forall_{\mathtt{i}}x A}{\infer[(Ecut)]{\Theta_{1}, \Theta_{2} \Rightarrow A[y/x]}{{\infer{\Theta_{1} \Rightarrow B \imp_{\mathtt{i}} C}{\vdots & \mathcal{D}_{1}}} & {\infer{(B \imp_{\mathtt{i}} C)^{n}, \Theta_{2} \Rightarrow A[y/x]}{\vdots & \mathcal{D}'_{2}}}}}.
\]

\noindent Since the weight of the derivation of the $(Ecut)$-bottom form becomes lesser, we can apply induction hypothesis and obtain an  $(Ecut)$-free derivation.
\end{proof}

Finally, the cut elimination theorem is obtained, as required.
\begin{thm}
\label{thm:cet}
If $\Gamma \Rightarrow \Delta$ is derivable in $\fljc$, then $\Gamma \Rightarrow \Delta$ is derivable in $\fljcc$.
\end{thm}

\noindent By Theorem \ref{thm:cet}, the subformula property is also obtained, which ensures the following corollary. 

\begin{corollary}
\label{cor:cet}
The sequent calculus $\fljc$ is a conservative extension of both first-order intuitionistic and classical logic.
\end{corollary}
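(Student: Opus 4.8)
The plan is to derive the corollary from the cut-elimination theorem (Theorem~\ref{thm:cet}) together with the subformula property that it yields. Recall that calling $\fljc$ a conservative extension of first-order intuitionistic logic (resp.\ classical logic) amounts to two claims: every sequent over $\mathcal{L}_{\mathbf{J}}$ (resp.\ $\mathcal{L}_{\mathbf{C}}$) provable in the standard calculus is provable in $\fljc$, and conversely every sequent over $\mathcal{L}_{\mathbf{J}}$ (resp.\ $\mathcal{L}_{\mathbf{C}}$) provable in $\fljc$ is already provable in the standard calculus. The first (``extension'') direction is routine: the classical case is immediate, since every inference rule of a standard multi-succedent calculus $\mathbf{LK}$ for first-order classical logic is literally a rule of $\fljc$ once classical implication and the classical universal quantifier are read as $\imp_{\mathtt{c}}$ and $\forall_{\mathtt{c}}$; and the intuitionistic case follows because every rule of $\mathbf{mLJ}$ other than the right rules for $\imp_{\mathtt{i}}$ and $\forall_{\mathtt{i}}$ is already a rule of $\fljc$, while those two rules are derivable in $\fljc$ whenever their contexts consist of $\mathsf{Form}_{\mathbf{J}}$-formulas, by the Proposition asserting exactly this (the one stated just before Theorem~\ref{thm:fcs}).

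The substance is the converse (``conservativity'') direction, for which the strategy is a standard subformula-property argument. Suppose $\Gamma \Rightarrow \Delta$ with $\Gamma \cup \Delta \subseteq \mathsf{Form}_{\mathbf{J}}$ is derivable in $\fljc$. By Theorem~\ref{thm:cet} it has a derivation $\mathcal{D}$ in $\fljcc$, and since $\mathcal{D}$ is cut-free, every formula occurring in $\mathcal{D}$ is a subformula of some formula of $\Gamma \cup \Delta$. As $\mathsf{Form}_{\mathbf{J}}$ is closed under subformulas, every formula appearing in $\mathcal{D}$ lies in $\mathsf{Form}_{\mathbf{J}}$; in particular $\imp_{\mathtt{c}}$ and $\forall_{\mathtt{c}}$ occur nowhere in $\mathcal{D}$, so none of $(\Rightarrow\imp_{\mathtt{c}})$, $(\imp_{\mathtt{c}}\Rightarrow)$, $(\Rightarrow\forall_{\mathtt{c}})$, $(\forall_{\mathtt{c}}\Rightarrow)$ is applied in $\mathcal{D}$, each such rule having a principal formula whose main connective is classical. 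The remaining rules, axioms, and structural rules of $\fljcc$ coincide with a presentation of $\mathbf{mLJ}$: the only point to check is that each application of the restricted rule $(\Rightarrow\imp_{\mathtt{i}})$ or $(\Rightarrow\forall_{\mathtt{i}})$ occurring in $\mathcal{D}$ has a context $\Theta$ of persistent $\mathsf{Form}_{\mathbf{J}}$-formulas, which is a fortiori a multiset of intuitionistic formulas, hence a legitimate context for the corresponding (more liberal) right rule of $\mathbf{mLJ}$; and the eigenvariable conditions of $(\Rightarrow\forall_{\mathtt{i}})$ and $(\exists\Rightarrow)$ are exactly those of $\mathbf{mLJ}$ and are untouched. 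Thus $\mathcal{D}$ is itself an $\mathbf{mLJ}$-derivation of $\Gamma \Rightarrow \Delta$. The classical case is parallel: a cut-free $\fljcc$-derivation of a sequent over $\mathcal{L}_{\mathbf{C}}$ contains no occurrence of $\imp_{\mathtt{i}}$ or $\forall_{\mathtt{i}}$, so uses none of $(\Rightarrow\imp_{\mathtt{i}})$, $(\imp_{\mathtt{i}}\Rightarrow)$, $(\Rightarrow\forall_{\mathtt{i}})$, $(\forall_{\mathtt{i}}\Rightarrow)$, and what remains is exactly $\mathbf{LK}$.

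I expect the only delicate point to be making the subformula-property step airtight in the presence of the non-standard rules $(\Rightarrow\imp_{\mathtt{i}})$ and $(\Rightarrow\forall_{\mathtt{i}})$, for which, by Definition~\ref{def:pri}, the whole lower sequent is principal: one must confirm that these rules still respect the subformula property, which they do, since every formula of the premise ($A$, the formulas of $\Theta$, and $B$ in the first case; $A[z/x]$ and the formulas of $\Theta$ in the second) is a subformula of a formula of the conclusion. Granting this -- which is precisely what is invoked by the sentence ``By Theorem~\ref{thm:cet}, the subformula property is also obtained'' -- the rest is bookkeeping: matching the rule set of cut-free $\fljcc$ restricted to a sublanguage against the rule set of the standard calculus, and noting that the context restriction on the intuitionistic right rules of $\fljc$ only ever shrinks the class of admissible premises relative to $\mathbf{mLJ}$, so it can never produce an $\mathbf{mLJ}$-illegitimate step.
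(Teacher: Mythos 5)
Your proof is correct and follows exactly the route the paper intends: the paper derives the corollary in one line from cut-elimination via the subformula property, which is precisely the argument you spell out (a cut-free derivation of a purely intuitionistic, resp.\ classical, sequent contains only formulas of that sublanguage, hence uses no rules for the other connectives, and the surviving rules are those of $\mathbf{mLJ}$, resp.\ $\mathbf{LK}$). Your additional care with the ``extension'' direction and with checking that the restricted right rules still satisfy the subformula property only makes explicit what the paper leaves implicit.
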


\section{Strong Completeness}
\label{sec:comp}
This section establishes the strong completeness theorem of $\fljc$. In~\cite{Cerro1996,Humberstone1979}, the completeness of propositional $\mathbf{C+J}$ was shown. In~\cite{Toyooka2021a}, the completeness of $\ljc$ was shown by establishing the fact that the calculus and the Hilbert system of $\mathbf{C+J}$ proposed in~\cite{Cerro1996} are {\em equipollent} in the following sense:

\begin{quote}
For any formula $A \in \mathsf{Form}_{\mathbf{C+J}}$, the sequent  $\Rightarrow A$ is derivable in $\ljc$ iff $A$ is derivable in the Hilbert system of $\mathbf{C+J}$.
\end{quote}

\noindent Only the weak completeness was shown in~\cite{Cerro1996,Humberstone1979,Toyooka2021a}, but the strong completeness of propositional $\ljc$ can be obtained by reinforcing a canonical model argument described in~\cite{Humberstone1979}. The strong completeness of the first-order combination $\fljc$, which will be established in this section via a canonical model argument, has not been shown so far. 

In this section, the expressions such as $\Gamma,\Delta,\Theta$ denote sets (not multisets) of formulas. In order to deal with a sequent which contains a possibly infinite set of formulas, we have to expand the notion of derivability as follows:

\begin{quote}
    $\Gamma \Rightarrow \Delta$ is derivable in $\fljc$ if for some finite subset $\Gamma'$ of $\Gamma$ and $\Delta'$ of $\Delta$, $\Gamma' \Rightarrow \Delta'$ is derivable in $\fljc$.
\end{quote}

\noindent We also have to expand the syntax $\mathcal{L}$ to $\mathcal{L^{+}}$ by adding a new countably infinite set of variables.

\begin{definition}
Let $L_{1}$ and $L_{2}$ be any syntax which has the same logical symbols and contains all constant and predicate symbols of $\mathcal{L}$. Recall that $\mathsf{Var}(L_{i})$ is the set of variables of $L_{i}$. We use $L_{1} \sqsubset L_{2}$ to mean 
\begin{center}
    $\mathsf{Var}(L_{1}) \subsetneq \mathsf{Var}(L_{2})$ and $\#(\mathsf{Var}(L_{2}) \setminus \mathsf{Var}(L_{1}))$ = $\omega$.
\end{center}
\end{definition}

Then, we should define a prime pair of sets of formulas with respect to a syntax as did in~\cite{Gabbay2009} to show the completeness of an ordinary first-order intuitionistic logic.

\begin{definition}
\label{def:prime}
A pair ${( \Gamma, \Delta )}_{L}$ of sets of formulas with respect to a  syntax $L$ is a {\em prime pair} with respect  to $L$ if it satisfies the following:
\begin{description}
    \item[($\Gamma$ is a theory)] If $\Gamma \Rightarrow A$ is derivable in $\fljc$, then $A \in \Gamma$,
    \item[(underivability)] $\Gamma \Rightarrow \Delta$ is not derivable in $\fljc$,
    \item[(primeness)] If $A \lor B \in \Gamma$ holds, then $A \in \Gamma$ or $B \in \Gamma$ holds,
    \item[($\exists$-property)] If $\exists x A \in \Gamma$ holds, then for some term $t$ of $L$, $A[t/x] \in \Gamma$ holds,
    \item[($\forall_{\mathtt{c}}$-property)] If $\forall_{\mathtt{c}}x A \in \Delta$ holds, then for some term $t$ of $L$, $A[t/x] \in \Delta$ holds.
\end{description}
A prime pair ${( \Gamma, \Delta )}_{L}$ with respect to $L$ is called {\em $L$-complete} if $\Gamma \cup \Delta$ = $\mathsf{Form}_{L}$ holds.
\end{definition}

\noindent The consistency of a prime pair $(\Gamma, \Delta)_{L}$, i.e., $\bot \notin \Gamma$ can be obtained from the condition (underivability).

\begin{lem}
\label{lem:ccc}
A prime pair ${( \Gamma, \Delta )}_{L}$ with respect to a syntax $L$ is {\em classical negation complete}, i.e., for all formulas $A$ in $L$, either $A \in \Gamma$ or $A \imp_{\mathtt{c}} \bot \in \Gamma$ holds.
\end{lem}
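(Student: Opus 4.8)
The plan is to prove classical negation completeness of a prime pair $(\Gamma,\Delta)_L$ by a standard Lindenbaum-style dichotomy argument, using the structural features of $\fljc$, in particular the rules $(\Rightarrow\neg_{\mathtt{c}})$ and $(\neg_{\mathtt{c}}\Rightarrow)$ derived in Section~\ref{sec:sks}, together with $(Cut)$. Fix a formula $A$ of $L$ and suppose, for contradiction, that $A \notin \Gamma$ and $A \imp_{\mathtt{c}} \bot \notin \Gamma$. Since $\Gamma$ is a theory (the first clause of Definition~\ref{def:prime}), $A \notin \Gamma$ gives that $\Gamma \Rightarrow A$ is \emph{not} derivable, and likewise $\Gamma \Rightarrow A\imp_{\mathtt{c}}\bot$ is not derivable. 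The goal is to turn the second of these non-derivabilities into a derivation of $\Gamma \Rightarrow \Delta$, contradicting (underivability).

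First I would observe that, using the derived rule $(\Rightarrow\neg_{\mathtt{c}})$ read bottom-up, a derivation of $\Gamma, A \Rightarrow \bot$ yields a derivation of $\Gamma \Rightarrow \neg_{\mathtt{c}}A$, i.e.\ of $\Gamma \Rightarrow A\imp_{\mathtt{c}}\bot$; contrapositively, since $\Gamma \Rightarrow A\imp_{\mathtt{c}}\bot$ is not derivable, $\Gamma, A \Rightarrow \bot$ is not derivable either, hence (by weakening, since $\bot\Rightarrow$ is an axiom and $(\bot)$ lets us discharge $\bot$ on the right — more simply, $\Gamma,A\Rightarrow$ would give $\Gamma,A\Rightarrow\bot$ by $(\Rightarrow w)$) $\Gamma, A \Rightarrow$ is not derivable. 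On the other hand, since $\Gamma$ is a theory and $A\notin\Gamma$, $\Gamma \Rightarrow A$ is not derivable. The key step is then to invoke the extended notion of derivability and the fact that $(\Gamma,\Delta)_L$ is a \emph{prime} pair: the whole point of the canonical construction (to be carried out in the completeness proof) is that $\Gamma$ together with $\Delta$ behaves maximally, so that for the specific formula $A$ one of $\Gamma\Rightarrow A$ or $\Gamma, A \Rightarrow \Delta$ must in fact be derivable; combining $\Gamma, A\Rightarrow\Delta$ with the derived $(\neg_{\mathtt{c}}\Rightarrow)$ applied to $\neg_{\mathtt{c}}A$ — or directly with $(Cut)$ on $A$ against $\Gamma\Rightarrow\Delta,A$ — we would get $\Gamma\Rightarrow\Delta$, the desired contradiction.

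More carefully, the cleanest route is: from $A\imp_{\mathtt{c}}\bot\notin\Gamma$ and ``$\Gamma$ is a theory'' we get $\Gamma\Rightarrow A\imp_{\mathtt{c}}\bot$ not derivable, hence (by the derived rule $(\Rightarrow\neg_{\mathtt{c}})$) $\Gamma,A\Rightarrow$ not derivable. Now apply the derived rule $(\neg_{\mathtt{c}}\Rightarrow)$, which reads $\Gamma\Rightarrow\Delta,A$ above the line and $\neg_{\mathtt{c}}A,\Gamma\Rightarrow\Delta$ below; combined with $(Cut)$ this shows that if both $\Gamma\Rightarrow\Delta,A$ and $A,\Gamma\Rightarrow\Delta$ were derivable then $\Gamma\Rightarrow\Delta$ would be derivable, contradicting (underivability). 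So at least one of $\Gamma\Rightarrow\Delta,A$ and $A,\Gamma\Rightarrow\Delta$ is underivable. But a prime pair is set up so that for every formula $A$ of $L$, either $A\in\Gamma$ (equivalently $\Gamma\Rightarrow A$ derivable, so also $\Gamma\Rightarrow\Delta,A$ derivable by $(\Rightarrow w)$) or $A,\Gamma\Rightarrow\Delta$ is derivable — this is exactly the maximality/saturation delivered by the Lindenbaum lemma that underlies Definition~\ref{def:prime}. Intersecting these, the only consistent possibility is $A\in\Gamma$; symmetrically, running the same argument with $A\imp_{\mathtt{c}}\bot$ in place of $A$ forces $A\imp_{\mathtt{c}}\bot\in\Gamma$ whenever $A\notin\Gamma$. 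Hence for every $A$, $A\in\Gamma$ or $A\imp_{\mathtt{c}}\bot\in\Gamma$.

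The main obstacle I anticipate is book-keeping about \emph{which} saturation properties of a prime pair are actually available at this point in the paper: Definition~\ref{def:prime} as stated lists only ($\Gamma$ is a theory), (underivability), (primeness), ($\exists$-property), and ($\forall_{\mathtt{c}}$-property), and does \emph{not} explicitly build in classical negation completeness — that is precisely what Lemma~\ref{lem:ccc} is extracting. So the honest proof cannot merely quote ``maximality''; it must derive the dichotomy from those five clauses. The real work is therefore: show that if $A\notin\Gamma$ then $A,\Gamma\Rightarrow\Delta$ \emph{is} derivable (using that $\Gamma$ is a theory, plus the classical rules $(\neg_{\mathtt{c}}\Rightarrow),(\Rightarrow\neg_{\mathtt{c}})$ and $(Cut)$ to move $A$ across the turnstile), and then conclude $\neg_{\mathtt{c}}A\in\Gamma$ because otherwise $\neg_{\mathtt{c}}A,\Gamma\Rightarrow\Delta$ together with $\Gamma\Rightarrow\neg_{\mathtt{c}}A$ (which would hold if $\Gamma,A\Rightarrow$ were derivable) contradicts (underivability) — so the delicate point is to get $\Gamma\Rightarrow\neg_{\mathtt{c}}A$ into $\Gamma$ via the ``theory'' clause, i.e.\ to show $\Gamma\Rightarrow\neg_{\mathtt{c}}A$ is derivable exactly when $A\notin\Gamma$. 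This amounts to the classical law of excluded middle for $\imp_{\mathtt{c}}$ being derivable in $\fljc$ (from $A\Rightarrow A$, $(\Rightarrow\neg_{\mathtt{c}})$ or $(\neg_{\mathtt{c}}\Rightarrow)$ and weakening one gets $\Rightarrow A, \neg_{\mathtt{c}}A$, hence $\Gamma\Rightarrow\Delta,A,\neg_{\mathtt{c}}A$), which then forces, by (primeness)-style reasoning propagated through the construction or directly by $(Cut)$ against (underivability), that one of $A,\neg_{\mathtt{c}}A$ lies in $\Gamma$.
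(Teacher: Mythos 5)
There is a genuine gap here, and it sits exactly at the point you yourself flag as delicate. Your main line of argument repeatedly leans on a saturation property of the form ``for every formula $A$ of $L$, either $\Gamma\Rightarrow A$ is derivable or $A,\Gamma\Rightarrow\Delta$ is derivable,'' but as you correctly observe, Definition~\ref{def:prime} does not supply this: a prime pair need not be $L$-complete, so $A\notin\Gamma$ gives you no sequent of the form $A,\Gamma\Rightarrow\Delta$ to cut with, and the proposed ``$(Cut)$ against (underivability)'' step has nothing to stand on. Having diagnosed this, your fallback at the end still does not close the argument: you derive $\Rightarrow A,\neg_{\mathtt{c}}A$ (which is correct, via $(Id)$, $(\Rightarrow w)$ and $(\Rightarrow\imp_{\mathtt{c}})$), but you then assert that this ``forces, by (primeness)-style reasoning propagated through the construction,'' that one of $A,\neg_{\mathtt{c}}A$ lies in $\Gamma$. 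The (primeness) clause only applies to a disjunction that is a \emph{member} of $\Gamma$, and the ($\Gamma$ is a theory) clause only applies to single-succedent sequents $\Gamma\Rightarrow A$; neither acts on the multi-succedent sequent $\Gamma\Rightarrow\Delta,A,\neg_{\mathtt{c}}A$. Appealing to ``the construction'' is also not legitimate, since the lemma is stated for an arbitrary prime pair, not only for those produced by Lemma~\ref{lem:com}.

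The missing step is small but essential, and it is exactly what the paper's proof does: from $\Rightarrow A,\neg_{\mathtt{c}}A$ apply $(\Rightarrow\lor_{1})$, $(\Rightarrow\lor_{2})$ and $(\Rightarrow c)$ to obtain the single formula $\Rightarrow A\lor(A\imp_{\mathtt{c}}\bot)$; then the ($\Gamma$ is a theory) clause (together with the extended notion of derivability, taking the empty finite subset of $\Gamma$) yields $A\lor(A\imp_{\mathtt{c}}\bot)\in\Gamma$, and (primeness) splits this disjunction into $A\in\Gamma$ or $A\imp_{\mathtt{c}}\bot\in\Gamma$. You had all the ingredients --- the derivability of classical excluded middle and the primeness clause --- but never assembled them into the disjunction that primeness can actually decompose; instead the argument detours through a contradiction scheme whose key premise is unavailable.
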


\begin{proof}
Let ${( \Gamma, \Delta )}_{L}$ be a prime pair with respect to a syntax $L$ and $A$ be a formula in $L$. In $\fljc$, $\Rightarrow A \lor (A \imp_{\mathtt{c}} \bot)$ is derivable. Then, by the condition ($\Gamma$ is a theory) in Definition \ref{def:prime}, it follows  $A \lor (A \imp_{\mathtt{c}} \bot) \in \Gamma$. By (primeness) in Definition \ref{def:prime}, either $A \in \Gamma$ or $A \imp_{\mathtt{c}} \bot \in \Gamma$, as is desired. 
\end{proof}

The following lemma implies that any underivable pair of sets of formulas can be extended to a prime $L$-complete pair 
for some appropriate syntax $L$. 

\begin{lem}
\label{lem:com}
\begin{enumerate}
    \item If a sequent $\Gamma \Rightarrow \Delta$ in $\mathcal{L}$ is not derivable in $\fljc$, then there exist a syntax $L$ and a prime $L$-complete pair ${( \Gamma^{*}, \Delta^{*} )}_{L}$ such that $\mathcal{L} \sqsubset L \sqsubset \mathcal{L^{+}}$, $\Gamma \subseteq \Gamma^{*}$ and $\Delta \subseteq \Delta^{*}$. 
    \item Let $\mathcal{L} \sqsubset L_{1} \sqsubset L_{2} \sqsubset \mathcal{L^{+}}$. If a sequent $\Gamma \Rightarrow \Delta$ in $L_{1}$ is not derivable in $\fljc$, then there exists a prime $L_{2}$-complete pair ${( \Gamma^{*}, \Delta^{*} )}_{L_{2}}$ such that $\Gamma \subseteq \Gamma^{*}$ and $\Delta \subseteq \Delta^{*}$.
\end{enumerate}
\end{lem}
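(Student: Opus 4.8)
The plan is to prove the two statements together via a standard Lindenbaum-style construction, modified in two ways: we must simultaneously achieve $L$-completeness (every formula of $L$ goes to the left or the right), primeness, the $\exists$-property, and the $\forall_{\mathtt{c}}$-property, and for the latter two we need the fresh variables supplied by $\mathcal{L}^{+}$. For part (1), first I would pick any syntax $L$ with $\mathcal{L} \sqsubset L \sqsubset \mathcal{L}^{+}$ such that $\mathsf{Var}(L) \setminus \mathsf{Var}(\mathcal{L})$ is infinite (reserving the rest of $\mathsf{Var}(\mathcal{L}^{+})$ in case it is needed, though actually one $L$ suffices). Then part (2) is the real workhorse: given $\mathcal{L} \sqsubset L_{1} \sqsubset L_{2} \sqsubset \mathcal{L}^{+}$ and an underivable $\Gamma \Rightarrow \Delta$ in $L_1$, enumerate $\mathsf{Form}_{L_2}$ as $A_0, A_1, \dots$ and build increasing chains $(\Gamma_k, \Delta_k)$ starting from $(\Gamma, \Delta)$, maintaining the invariant that $\Gamma_k \Rightarrow \Delta_k$ is underivable in $\fljc$ (in the infinitary sense, i.e. no finite subsequent is derivable) and that only finitely many variables of $\mathsf{Var}(L_2) \setminus \mathsf{Var}(L_1)$ have been used so far.

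At stage $k$, to decide $A_k$: by the underivability invariant and the cut rule, not both $\Gamma_k \Rightarrow \Delta_k, A_k$ and $\Gamma_k, A_k \Rightarrow \Delta_k$ can be derivable, so I put $A_k$ on whichever side keeps the sequent underivable (if both keep it underivable, pick the left). If $A_k$ was added on the left and has the form $B \lor C$, then since $B \lor C, \Gamma_k \Rightarrow \Delta_k$ is underivable, at least one of $B, \Gamma_k \Rightarrow \Delta_k$ and $C, \Gamma_k \Rightarrow \Delta_k$ is underivable (otherwise $(\lor \Rightarrow)$ would derive the disjunction sequent), so I add that disjunct too. If $A_k$ was added on the left and has the form $\exists x B$, then I choose a variable $z$ fresh for everything appearing so far (possible since only finitely many variables have been used and infinitely many remain in $\mathsf{Var}(L_2) \setminus \mathsf{Var}(L_1)$) and add $B[z/x]$ on the left: underivability is preserved because if $B[z/x], \Gamma' \Rightarrow \Delta'$ were derivable for a finite $\Gamma' \subseteq \Gamma_k$, $\Delta' \subseteq \Delta_k$, then by freshness of $z$ and the rule $(\exists \Rightarrow)$ we could derive $\exists x B, \Gamma' \Rightarrow \Delta'$, contradicting the invariant. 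Dually, if $A_k$ was added on the right and has the form $\forall_{\mathtt{c}} x B$, I pick a fresh $z$ and add $B[z/x]$ on the right, with underivability preserved via $(\Rightarrow \forall_{\mathtt{c}})$. Taking $\Gamma^{*} = \bigcup_k \Gamma_k$ and $\Delta^{*} = \bigcup_k \Delta_k$, every formula of $L_2$ was considered at some stage so $\Gamma^{*} \cup \Delta^{*} = \mathsf{Form}_{L_2}$; underivability of $\Gamma^{*} \Rightarrow \Delta^{*}$ follows from the finitary definition of derivability since any finite subsequent already appeared inside some $\Gamma_k \Rightarrow \Delta_k$; the condition that $\Gamma^{*}$ is a theory follows from $L_2$-completeness together with underivability (if $\Gamma^{*} \Rightarrow A$ is derivable but $A \notin \Gamma^{*}$, then $A \in \Delta^{*}$ and cut gives derivability of $\Gamma^{*} \Rightarrow \Delta^{*}$); and primeness, the $\exists$-property, and the $\forall_{\mathtt{c}}$-property are exactly what the disjunction, existential, and classical-universal steps guarantee. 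Part (1) then follows by applying part (2) with $L_1 = \mathcal{L}$ and any fixed $L_2$ with $\mathcal{L} \sqsubset L_2 \sqsubset \mathcal{L}^{+}$, after first noting $\Gamma \Rightarrow \Delta$ is (trivially) still underivable when viewed in $L_2$.

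The main obstacle, and the point deserving the most care, is the bookkeeping of fresh variables: each existential-on-the-left or classical-universal-on-the-right step consumes a new variable, there are countably many such steps, and $\mathsf{Var}(L_2) \setminus \mathsf{Var}(L_1)$ is exactly countably infinite, so the supply is adequate but must be managed so that at every finite stage only finitely many have been used — this is what licenses the "pick $z$ fresh" move. One must also check that introducing $B[z/x]$ does not clash with bound variables; here the clash-avoiding substitution convention and, in the cut-elimination section, the assumption that bound and free variables are disjoint, make the rule applications $(\exists \Rightarrow)$ and $(\Rightarrow \forall_{\mathtt{c}})$ legitimate with the chosen eigenvariable $z$. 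A secondary subtlety is that the invariant must be the infinitary underivability of $\Gamma_k \Rightarrow \Delta_k$ (no finite subsequent derivable), and the eigenvariable arguments above should be phrased in terms of an arbitrary finite subsequent, since the eigenvariable condition of $(\exists \Rightarrow)$ and $(\Rightarrow \forall_{\mathtt{c}})$ refers to the actual (finite) sequent being derived.
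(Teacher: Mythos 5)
Your construction is essentially the paper's: a Lindenbaum-style enumeration of $\mathsf{Form}_{L}$, adding each formula to whichever side preserves underivability, with fresh-variable witnesses adjoined for $\exists$ on the left and $\forall_{\mathtt{c}}$ on the right, and the unions taken at the end. The only cosmetic differences are that you prove (2) and derive (1) from it while the paper proves (1) and notes (2) is analogous, and your explicit disjunct-adding step is redundant, since primeness already follows from $L$-completeness, underivability, and $(\lor\Rightarrow)$.
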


\begin{proof}
Since the ways of showing (1) and (2) are very similar, the only former is described here. Suppose a sequent $\Gamma \Rightarrow \Delta$ in $\mathcal{L}$ is not derivable in $\fljc$. Consider a syntax $L$ such that $\mathcal{L} \sqsubset L \sqsubset \mathcal{L}^{+}$. Let $(A_{n})_{n \in \mathbb{N}}$ be an enumeration of all formulas in ${L}$. Since we will work only on this syntax, the suffix ``$L$'' to a pair of sets of formulas is omitted in the rest of this proof. We inductively define a pair ${( \Gamma_{n}, \Delta_{n} )}_{n \in \mathbb{N}}$ of sets of formulas such that a sequent $\Gamma_{n} \Rightarrow \Delta_{n}$ in $L$ is not derivable, $\Gamma_{n} \subseteq \Gamma_{n+1}$, and $\Delta_{n} \subseteq \Delta_{n+1}$ as follows:
\begin{description}
    \item [(Basis)] ${( \Gamma_{0}, \Delta_{0} )}$ = ${( \Gamma, \Delta )}$,
    
    \item [(Inductive Step)]
    Suppose ${( \Gamma_{i}, \Delta_{i} )}$ is already defined for any $i$ which satisfies $0 \leq i \leq n$. Let $z$ satisfy $z \in \mathsf{Var}(L)$ and $z \notin \mathsf{Var}(\mathcal{L})$ and do not occur free in $\Gamma_{n}$, $\Delta_{n}$, or $A_{n}$. Then ${( \Gamma_{n+1}, \Delta_{n+1} )}$ is defined as follows: 
    
    \[
    {( \Gamma_{n+1}, \Delta_{n+1} )} = 
    \begin{cases}
    {( \Gamma_{n} \cup \{ A_{n} \}, \Delta_{n})} & \text { if } \Gamma, A_{n} \Rightarrow \Delta \text{ is underivable and } A_{n} \not\equiv \exists x B, \\
    {( \Gamma_{n} \cup \{ \exists x B, B[z/x] \}, \Delta_{n})} & \text{ if } \Gamma, A_{n} \Rightarrow \Delta \text{ is underivable and } A_{n} \equiv \exists x B,\\
    {( \Gamma_{n}, \Delta_{n} \cup \{ A_{n} \})} & \text{ if } \Gamma \Rightarrow \Delta, A_{n} \text{ is underivable and } A_{n} \not\equiv \forall_{\mathtt{c}} x B,\\
    {( \Gamma_{n}, \Delta_{n} \cup \{ \forall_{\mathtt{c}} x B, B[z/x] \})} & \text{ if } \Gamma \Rightarrow \Delta, A_{n} \text{ is underivable and } A_{n} \equiv \forall_{\mathtt{c}} x B.\\
    \end{cases}
    \]
    It should be noted that a variable $z$ cannot be run out of, since the syntax $L$ is obtained by adding to $\mathcal{L}$ a new countably infinite set of variables. It can be checked that all of the four cases described above preserve underivability. It can also be checked that  whatever formula $A_{n}$ is, it can be distinguished into one of the four cases described above.
\end{description}
Based on the definition, the sets of formulas $\Gamma^{*}$ and $\Delta^{*}$ are defined as follows, respectively:
\begin{center}
     \[
     \Gamma^{*} := \underset{n \in \mathbb{N}}{\bigcup} {\Gamma_{n}},
     \quad
     \Delta^{*} := \underset{n \in \mathbb{N}}{\bigcup} {\Delta_{n}}.
    \]
\end{center}
It can be shown that the pair ${( \Gamma^{*}, \Delta^{*} )}_{L}$ satisfies the conditions required in this lemma. 
\end{proof}


\begin{definition}[Canonical Model]
\label{def:cm}
Define the canonical model $M^{c}$ = $(W^{c},R^{c},(D^{c}({( \Gamma, \Delta )}_{L}))_{{( \Gamma, \Delta )}_{L} \in W^{c}},V^{c})$ of a syntax $\mathcal{L}$ as follows:
\begin{itemize}
    \item $W^{c}$ := $\{ {( \Gamma, \Delta )}_{L} | \mathcal{L} \sqsubset L \sqsubset \mathcal{L^{+}}$ and ${( \Gamma, \Delta )}_{L}$ is a prime $L$-complete pair $\}$,
    \item ${( \Gamma, \Delta )}_{L} R^{c} {( \Gamma', \Delta' )}_{L'}$ iff all of the following hold:
    \begin{itemize}
        \item If $P(t_{1}, \ldots, t_{m}) \in \Gamma$ holds, then $P(t_{1}, \ldots, t_{m}) \in \Gamma'$ holds,
        \item If $A \imp_{\mathtt{i}} B \in \Gamma$ holds, then $A \imp_{\mathtt{i}} B \in \Gamma'$ holds, and
        \item If $\forall_{\mathtt{i}}x A \in \Gamma$ holds, then $\forall_{\mathtt{i}} x A \in \Gamma'$ holds,
    \end{itemize}
    \item $D^{c}(( \Gamma, \Delta )_{L})$ = $\{ t | t \text{ is a term of } L \}$,
    \item Define a valuation $V^{c}$ as follows:
\begin{itemize}
    \item $\langle V(t_{1}), \ldots, V(t_{m}) \rangle \in V(P,{( \Gamma, \Delta )}_{L})$  iff  $P(t_{1}, \ldots, t_{m}) \in \Gamma$,
    \item $V(c)$ := $c$, where $c$ is a constant in $\mathcal{L}$.
\end{itemize}
\end{itemize}
\end{definition}

\noindent It is easy to see that 
the canonical model $M^{c}$ is {\em well-defined}, i.e., $R^{c}$ is a preorder and $V^{c}$ satisfies heredity for predicates.
\if0
\begin{lem}
\label{lem:ck}
The canonical model $M^{c}$ {\em is} a Kripke model.
\end{lem}
\fi
By the induction on the complexity of a formula, Lemma \ref{lem:tl} is shown, where Lemma \ref{lem:ccc} and Lemma \ref{lem:com} (2) enable us to deal with the cases when $A$ is of the form $B \imp_{\mathtt{c}} C$ and when $A$ is of the form $B \imp_{\mathtt{i}} C$ or $\forall_{\mathtt{i}}x B$, respectively. 

\begin{lem}[Truth Lemma]
\label{lem:tl}
For any ${( \Gamma, \Delta )}_{L} \in W^{c}$, any $L$-formula $A$, i.e., any formula $A \in \mathsf{Form}_{L}$, and any $\{ x_{1}, \ldots, x_{n} \} \subseteq \mathsf{Var}(L)$ such that $\mathsf{FV}(A) \subseteq \{ x_{1}, \ldots, x_{n} \}$, the following equivalence holds: 
\[
\begin{array}{lll}
     A \in \Gamma 
     &\mathrm{iff}& 
     {( \Gamma, \Delta )}_{L} \models_{M^{c}} A[\underline{x_{1}}/x_{1}] \cdots [\underline{x_{n}}/x_{n}],\\
\end{array}
\]
where it is noted that $A[\underline{x_{1}}/x_{1}] \cdots [\underline{x_{n}}/x_{n}]\in \mathcal{L}(D^{c}{( \Gamma, \Delta )}_{L})$. 
\end{lem}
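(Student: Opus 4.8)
The plan is to prove the Truth Lemma by induction on the complexity of the formula $A$, establishing both directions of the biconditional simultaneously. For the base case, where $A$ is atomic of the form $P(t_1,\ldots,t_m)$, the equivalence follows immediately from the definition of $V^c$ in the canonical model: $P(t_1,\ldots,t_m) \in \Gamma$ iff $\langle V(t_1),\ldots,V(t_m)\rangle \in V(P,{(\Gamma,\Delta)}_L)$ iff ${(\Gamma,\Delta)}_L \models_{M^c} P(t_1,\ldots,t_m)$, after noting that the substitution $[\underline{x_i}/x_i]$ has no real effect once terms are interpreted in $D^c({(\Gamma,\Delta)}_L)$, since this domain consists of the terms of $L$ themselves. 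The case $A \equiv \bot$ is handled by the consistency of a prime pair (noted just after Lemma~\ref{lem:com}), which gives $\bot \notin \Gamma$, matching $w \not\models_{M^c} \bot$.

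For the inductive step I would treat each connective and quantifier in turn. The cases of $\land$, $\lor$, and $\exists$ are routine: for $\land$ and $\exists$ one uses that $\Gamma$ is a theory closed under derivable consequence together with the left and right rules, and for $\lor$ and $\exists$ one uses the (primeness) and ($\exists$-property) conditions of Definition~\ref{def:prime} for the forward direction. The classical connectives require the extra lemmas flagged in the text: for $A \equiv B \imp_{\mathtt{c}} C$, the forward direction uses that $\Gamma$ is a theory, and the backward direction uses Lemma~\ref{lem:ccc} (classical negation completeness), which lets us argue by cases on whether $B \in \Gamma$; for $A \equiv \forall_{\mathtt{c}} x B$, the forward direction uses the ($\forall_{\mathtt{c}}$-property) to produce a witnessing term in $\Delta$, and the backward direction uses that $\Gamma$ is a theory together with $(\forall_{\mathtt{c}} \Rightarrow)$.

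The genuinely delicate cases are the persistent connectives $B \imp_{\mathtt{i}} C$ and $\forall_{\mathtt{i}} x B$, because their semantic clauses quantify over all $R^c$-successors, and $R^c$ allows the syntax to grow from $L$ to some larger $L'$. For $A \equiv B \imp_{\mathtt{i}} C$: the forward direction is straightforward, since if $B \imp_{\mathtt{i}} C \in \Gamma$ and ${(\Gamma,\Delta)}_L R^c {(\Gamma',\Delta')}_{L'}$ then $B \imp_{\mathtt{i}} C \in \Gamma'$ by definition of $R^c$, so if also ${(\Gamma',\Delta')}_{L'} \models_{M^c} B$ then the induction hypothesis and closure under $(\imp_{\mathtt{i}}\Rightarrow)$ give $C \in \Gamma'$, hence ${(\Gamma',\Delta')}_{L'} \models_{M^c} C$. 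The backward direction is the main obstacle: assuming $B \imp_{\mathtt{i}} C \notin \Gamma$, I need to build an $R^c$-successor ${(\Gamma',\Delta')}_{L'}$ with $B \in \Gamma'$ and $C \notin \Gamma'$. The key point is that $\Gamma^{\mathrm{p}} \cup \{B\} \Rightarrow C$ must be underivable, where $\Gamma^{\mathrm{p}}$ is the set of persistent formulas in $\Gamma$ --- this is exactly where the restricted right rule $(\Rightarrow \imp_{\mathtt{i}})$ matters, since the ordinary rule would make $B \imp_{\mathtt{i}} C$ derivable from $\Gamma \Rightarrow B \imp_{\mathtt{i}} C$ and thus force it into $\Gamma$. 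One then extends this underivable sequent, via Lemma~\ref{lem:com}~(2) with a fresh larger syntax $L' \sqsupset L$, to a prime $L'$-complete pair ${(\Gamma',\Delta')}_{L'}$; that $\Gamma' \supseteq \Gamma^{\mathrm{p}}$ gives $R^c$-accessibility, $B \in \Gamma'$, and $C \in \Delta'$ hence $C \notin \Gamma'$. The case $A \equiv \forall_{\mathtt{i}} x B$ is analogous, with the underivable sequent being $\Gamma^{\mathrm{p}} \Rightarrow B[z/x]$ for a variable $z$ fresh for $\Gamma^{\mathrm{p}}$; here one must be careful that passing to the larger syntax $L'$ supplies such a $z$, and that the eigenvariable condition in $(\Rightarrow \forall_{\mathtt{i}})$ is what guarantees underivability of $\Gamma^{\mathrm{p}} \Rightarrow B[z/x]$ given $\forall_{\mathtt{i}} x B \notin \Gamma$. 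Throughout, I would keep careful track of the bookkeeping around the named-element substitutions $[\underline{x_i}/x_i]$ and the fact that terms of $L$ serve as their own domain elements, so that the interplay between object-level substitution $[t/x]$ and the semantic clauses goes through cleanly.
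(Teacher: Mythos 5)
Your proposal follows essentially the same route as the paper's proof: induction on complexity, with the crucial $\imp_{\mathtt{i}}$ and $\forall_{\mathtt{i}}$ cases handled by restricting to the persistent part $\Theta$ of $\Gamma$, observing that $B,\Theta\Rightarrow C$ (resp.\ $\Theta\Rightarrow B[z/x]$ for a fresh $z$ supplied by a larger syntax) must be underivable lest the restricted right rule force the formula into $\Gamma$, and then extending via Lemma~\ref{lem:com}~(2) to a prime $L'$-complete $R^{c}$-successor; the classical cases via Lemma~\ref{lem:ccc} and the ($\forall_{\mathtt{c}}$-property) also match the paper's indications. The only slip is that for $\forall_{\mathtt{c}}xB$ you have the two directions swapped --- the ($\forall_{\mathtt{c}}$-property) together with $L$-completeness handles the right-to-left direction (contrapositively, producing a witness $B[t/x]\in\Delta$), while the left-to-right direction uses theory-closure under $(\forall_{\mathtt{c}}\Rightarrow)$ and $(Cut)$ --- a harmless mislabeling that does not affect the argument.
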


\begin{proof}
We use induction on the complexity of a formula $A$. The case when the main connective of $A$ is intuitionistic implication and the case when it is intuitionistic universal quantifier are dealt with here. 
\begin{itemize}
    \item Let $A$ be of the form $B \imp_{\mathtt{i}} C$.
\begin{description}
    \item [(From left to right)] Suppose $B \imp_{\mathtt{i}} C \in \Gamma$. Our goal is to show that\linebreak ${( \Gamma, \Delta )}_{L} \models_{M^{c}} (B \imp_{\mathtt{i}} C)[\underline{x_{1}}/x_{1}] \cdots [\underline{x_{n}}/x_{n}]$, which is syntactically the same as\linebreak ${( \Gamma, \Delta )}_{L} \models_{M^{c}} B[\underline{x_{1}}/x_{1}] \cdots [\underline{x_{n}}/x_{n}] \imp_{\mathtt{i}} C[\underline{x_{1}}/x_{1}] \cdots [\underline{x_{n}}/x_{n}]$. Fix any ${( \Gamma', \Delta' )}_{L'} \in W^{c}$ such that ${( \Gamma, \Delta )}_{L} R^{c} {( \Gamma', \Delta' )}_{L'}$ and ${( \Gamma', \Delta' )}_{L'} \models_{M^{c}} B[\underline{x_{1}}/x_{1}] \cdots [\underline{x_{n}}/x_{n}]$. It suffices to show ${( \Gamma', \Delta' )}_{L'} \models_{M^{c}} C[\underline{x_{1}}/x_{1}] \cdots [\underline{x_{n}}/x_{n}]$. By applying induction hypothesis, $B \in \Gamma'$ is obtained. Since both $B \imp_{\mathtt{i}} C \in \Gamma$ and ${( \Gamma, \Delta )}_{L} R^{c} {( \Gamma', \Delta' )}_{L'}$ hold, $B \imp_{\mathtt{i}} C \in \Gamma'$ holds. Thus, both $\Gamma' \Rightarrow B$ and $\Gamma' \Rightarrow B \imp_{\mathtt{i}} C$ are derivable in $\fljc$. By applying to these sequents $(\imp_{\mathtt{i}} \Rightarrow)$, $(Cut)$, and $(c \Rightarrow)$, the sequent $\Gamma \Rightarrow C$ is derived. From the condition ($\Gamma'$ is a theory) in Definition \ref{def:prime}, $C \in \Gamma'$ holds, and by applying induction hypothesis, ${( \Gamma', \Delta' )}_{L'} \models_{M^{c}} C[\underline{x_{1}}/x_{1}] \cdots [\underline{x_{n}}/x_{n}]$ is obtained, as is desired. 

    \item [(From right to left)] Fix any ${( \Gamma, \Delta )}_{L} \in W^{c}$. In order to show the contrapositive, suppose $B \imp_{\mathtt{i}} C \notin \Gamma$. From the condition ($\Gamma$ is a theory) in Definition \ref{def:prime}, the sequent $\Gamma \Rightarrow B \imp_{\mathtt{i}} C$ is not derivable in $\fljc$. Consider the set $\Theta$ of all persistent formulas contained in $\Gamma$. Since $\Theta \subseteq \Gamma$, the sequent $\Theta \Rightarrow B \imp_{\mathtt{i}} C$ is not derivable in $\fljc$. This implies the sequent $B, \Theta \Rightarrow C$  is not derivable, because if this were, the sequent $\Theta \Rightarrow B \imp_{\mathtt{i}} C$ would be derivable by applying $(\Rightarrow \imp_{\mathtt{i}})$. Thus, from Lemma \ref{lem:com} (2), there exists a syntax $L'$ and a prime $L'$-complete pair $(\Gamma',\Delta')_{L'}$ such that $\mathcal{L} \sqsubset L \sqsubset L' \sqsubset \mathcal{L^{+}}$, $\Theta \cup \{ B \} \subseteq \Gamma'$ and $\{ C \} \subseteq \Delta'$. From the condition (underivability) in Definition \ref{def:prime}, $C \notin \Gamma'$. Since $\Theta$ contains all of the persistent formulas in $\Gamma$, ${( \Gamma, \Delta )}_{L} R^{c} (\Gamma', \Delta')_{L'}$ holds. By applying induction hypothesis to $B \in \Gamma'$ and $C \notin \Gamma'$, ${( \Gamma', \Delta' )}_{L'} \models_{M^{c}} B[\underline{x_{1}}/x_{1}] \cdots [\underline{x_{n}}/x_{n}]$ and ${( \Gamma', \Delta' )}_{L'} \not\models_{M^{c}} C[\underline{x_{1}}/x_{1}] \cdots [\underline{x_{n}}/x_{n}]$ are obtained. These ensures ${( \Gamma, \Delta )}_{L} \not\models_{M^{c}} (B \imp_{\mathtt{i}} C)[\underline{x_{1}}/x_{1}] \cdots [\underline{x_{n}}/x_{n}]$, as is desired.
\end{description}

    \item Let $A$ be of the form $\forall_{\mathtt{i}} x B$.
    \begin{description}
    \item [(From left to right)] Suppose $\forall_{\mathtt{i}}x B \in \Gamma$. Our goal is to show ${( \Gamma, \Delta )}_{L} \models_{M^{c}} (\forall_{\mathtt{i}} x B)[\underline{x_{1}}/x_{1}] \cdots [\underline{x_{n}}/x_{n}]$, which is syntactically the same as ${( \Gamma, \Delta )}_{L} \models_{M^{c}} \forall_{\mathtt{i}} x (B[\underline{x_{1}}/x_{1}] \cdots [\underline{x_{n}}/x_{n}])$. Fix any  ${( \Gamma', \Delta' )}_{L'} \in W^{c}$ such that ${( \Gamma, \Delta )}_{L} R^{c} {( \Gamma', \Delta' )}_{L'}$. It suffices to show for all $t \in D^{c}(( \Gamma', \Delta' )_{L'})$, ${( \Gamma', \Delta' )}_{L'} \models_{M^{c}} B[\underline{x_{1}}/x_{1}] \cdots [\underline{x_{n}}/x_{n}][\underline{t}/x]$. Since both $\forall_{\mathtt{i}} x B \in \Gamma$ and ${( \Gamma, \Delta )}_{L} R^{c} {( \Gamma', \Delta' )}_{L'}$ hold, $\forall_{\mathtt{i}} x B \in \Gamma'$ holds, which implies $\Gamma' \Rightarrow \forall_{\mathtt{i}} x B$ is derivable in $\fljc$. It is trivial that the sequent $B[t/x] \Rightarrow B[t/x]$. By applying to these sequents $(\forall_{\mathtt{i}} \Rightarrow)$ and $(Cut)$, the sequent $\Gamma' \Rightarrow B[t/x]$ is obtained. From the condition ($\Gamma'$ is a theory) in Definition \ref{def:prime}, $B[t/x] \in \Gamma'$ holds. The term $t$ can be a constant, one of $x_{1}, \ldots, x_{n}$, or a variable different from any of $x_{1}, \ldots, x_{n}$. The only last case is dealt with here. By applying induction hypothesis, ${( \Gamma', \Delta' )}_{L'} \models_{M^{c}} B[t/x][\underline{x_{1}}/x_{1}] \cdots [\underline{x_{n}}/x_{n}][\underline{t}/t]$ is obtained. 
    We note that $B[t/x][\underline{x_{1}}/x_{1}] \cdots [\underline{x_{n}}/x_{n}][\underline{t}/t]$
    is $B[\underline{x_{1}}/x_{1}] \cdots [\underline{x_{n}}/x_{n}][t[\underline{x_{1}}/x_{1}] \cdots [\underline{x_{n}}/x_{n}]/x][\underline{t}/t]$. Since $t$ is distinguished from any of $x_{1}, \ldots, x_{m}$, $B[\underline{x_{1}}/x_{1}] \cdots [\underline{x_{n}}/x_{n}][t[\underline{x_{1}}/x_{1}] \cdots [\underline{x_{n}}/x_{n}]/x][\underline{t}/t]$ is $B[\underline{x_{1}}/x_{1}] \cdots [\underline{x_{n}}/x_{n}][t/x][\underline{t}/t]$, which is syntactically the same as $B[\underline{x_{1}}/x_{1}] \cdots [\underline{x_{n}}/x_{n}][\underline{t}/x]$. This argument implies ${( \Gamma', \Delta' )}_{L'} \models_{M^{c}} B[\underline{x_{1}}/x_{1}] \cdots [\underline{x_{n}}/x_{n}][\underline{t}/x]$, as is desired. 

    \item [(From right to left)] Fix any ${( \Gamma, \Delta )}_{L} \in W^{c}$. In order to show the contrapositive, suppose $\forall_{\mathtt{i}}x B \notin \Gamma$. From the condition (underivability) in Definition \ref{def:prime}, the sequent $\Gamma \Rightarrow \forall_{\mathtt{i}}x B$ is not derivable in $\fljc$. Consider the set $\Theta$ of all persistent formulas contained in $\Gamma$. Since $\Theta \subseteq \Gamma$, the sequent $\Theta \Rightarrow \forall_{\mathtt{i}}x B$ is not derivable in $\fljc$. Consider a syntax $L_{1}$, obtained by adding to $L$ a variable $z$, which is not in $L$. The fact that the syntax $L$ is obtained by adding a new countably infinite set of variables to $\mathcal{L}$ ensures $\mathcal{L} \sqsubset L_{1}$. Since $z$ does not occur free in $\Theta$ and $\forall_{\mathtt{i}}x B$, the sequent $\Theta \Rightarrow B[z/x]$ is not derivable, because if this sequent were, the sequent $\Theta \Rightarrow \forall_{\mathtt{i}}x B$ would be derivable by applying $(\Rightarrow \forall_{\mathtt{i}})$. Thus, from Lemma \ref{lem:com} (2), there exists a syntax $L_{2}$ and a prime $L_{2}$-complete pair $(\Gamma',\Delta')_{L_{2}}$ such that $\mathcal{L} \sqsubset L_{1} \sqsubset L_{2} \sqsubset \mathcal{L^{+}}$, $\Theta \subseteq \Gamma'$ and $\{ B[z/x] \} \subseteq \Delta'$. By (underivability) in Definition \ref{def:prime}, $B[z/x] \notin \Gamma'$. Since $\Theta$ contains all of the persistent formulas in $\Gamma$, ${( \Gamma, \Delta )}_{L} R^{c} (\Gamma', \Delta')_{L_{2}}$ holds. By applying induction hypothesis to $B[z/x] \notin \Gamma'$, ${( \Gamma', \Delta' )}_{L_{2}} \not\models_{M^{c}} B[z/x][\underline{x_{1}}/x_{1}] \cdots [\underline{x_{n}}/x_{n}][\underline{z}/z]$ is obtained. We note that $B[z/x][\underline{x_{1}}/x_{1}] \cdots [\underline{x_{n}}/x_{n}][\underline{z}/z]$ is $B[\underline{x_{1}}/x_{1}] \cdots [\underline{x_{n}}/x_{n}][z[\underline{x_{1}}/x_{1}] \cdots [\underline{x_{n}}/x_{n}]/x][\underline{z}/z]$. Since $x$ is distinguished from any of $x_{1}, \ldots, x_{n}$, $B[\underline{x_{1}}/x_{1}] \cdots [\underline{x_{n}}/x_{n}][z[\underline{x_{1}}/x_{1}] \cdots [\underline{x_{n}}/x_{n}]/x][\underline{z}/z]$ is syntactically the same as  $B[\underline{x_{1}}/x_{1}] \cdots [\underline{x_{n}}/x_{n}][z/x][\underline{z}/z]$, which is $B[\underline{x_{1}}/x_{1}] \cdots [\underline{x_{n}}/x_{n}][\underline{z}/x]$. This argument implies ${( \Gamma', \Delta' )}_{L_{2}} \not\models_{M^{c}} B[\underline{x_{1}}/x_{1}] \cdots [\underline{x_{n}}/x_{n}][\underline{z}/x]$. The above argument ensures ${( \Gamma, \Delta )}_{L} \not\models_{M^{c}} (\forall_{\mathtt{i}}x B)[\underline{x_{i}}/x_{i}]$, as is desired.
    \qedhere
\end{description}
\end{itemize}
\end{proof}

Finally, we obtain the following desired theorem.

\begin{thm}[Strong Completeness Theorem]
\label{thm:sct}
For any set $\Gamma \cup \{ A \}$ of formulas in $\mathcal{L}$, if $\Gamma \models A$ then $\Gamma \Rightarrow A$ is derivable in $\fljc$.
\end{thm}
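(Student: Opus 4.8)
The plan is to prove strong completeness by contraposition, using the canonical model $M^{c}$ from Definition~\ref{def:cm} together with the Truth Lemma (Lemma~\ref{lem:tl}) and the extension lemma (Lemma~\ref{lem:com}). So suppose $\Gamma \Rightarrow A$ is not derivable in $\fljc$; I must produce a countermodel witnessing $\Gamma \not\models A$. The first step is to reduce to the closed case: let $z_1,\dots,z_n$ be the free variables occurring in $\Gamma \cup \{A\}$. Since $\Gamma \Rightarrow A$ is underivable and (by Lemma~\ref{lem:subst}) substitution of fresh variables preserves derivability, a suitable renamed sequent $\Gamma' \Rightarrow A'$ in $\mathcal{L}$ is still underivable; more carefully, I want to keep the free variables as they are and simply treat them as ``parameters'' that will be interpreted by themselves (via the $\underline{\cdot}$ naming) in the canonical model, exactly as the statement of the Truth Lemma anticipates.

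Next I would apply Lemma~\ref{lem:com}(1) to the underivable sequent $\Gamma \Rightarrow A$: this yields a syntax $L$ with $\mathcal{L} \sqsubset L \sqsubset \mathcal{L}^{+}$ and a prime $L$-complete pair $(\Gamma^{*},\Delta^{*})_L$ with $\Gamma \subseteq \Gamma^{*}$ and $A \in \Delta^{*}$; by $L$-completeness this gives $A \notin \Gamma^{*}$. This pair is a world of the canonical model, i.e. $(\Gamma^{*},\Delta^{*})_L \in W^{c}$. Then I invoke the Truth Lemma at this world: since every $C \in \Gamma$ lies in $\Gamma^{*}$, we get $(\Gamma^{*},\Delta^{*})_L \models_{M^{c}} C[\underline{x_1}/x_1]\cdots[\underline{x_m}/x_m]$ for all $C \in \Gamma$ (with $x_1,\dots,x_m$ the free variables of $C$), while since $A \notin \Gamma^{*}$ we get $(\Gamma^{*},\Delta^{*})_L \not\models_{M^{c}} A[\underline{z_1}/z_1]\cdots[\underline{z_n}/z_n]$. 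Taking the assignment $d:\mathsf{Var}(L)\to D^{c}((\Gamma^{*},\Delta^{*})_L)$ defined by $d(x)=x$ (legitimate since $D^{c}$ consists of all terms of $L$, which include all variables), this is precisely a witness to $\Gamma \not\models A$ in the sense of Definition~\ref{def:sec}. Hence $\Gamma \models A$ fails, contradicting the hypothesis, so $\Gamma \Rightarrow A$ must be derivable.

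One technical point I would be careful about is that Definition~\ref{def:sec} quantifies over all Kripke models, and the canonical model $M^{c}$ is built over the expanded syntax $\mathcal{L}^{+}$ rather than over $\mathcal{L}$; but since $M^{c}$ is a genuine Kripke model and the formulas in $\Gamma \cup \{A\}$ are $\mathcal{L}$-formulas (hence also $L$-formulas and $\mathcal{L}^{+}$-formulas), the satisfaction clauses are unaffected, so this causes no difficulty. A second point is that the assignment $d$ must be total on $\mathsf{Var}$, but only the finitely many free variables actually occurring in $\Gamma \cup \{A\}$ matter; on the remaining variables $d$ can be defined arbitrarily (e.g. as the identity again), which is harmless. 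I expect the genuine conceptual work — establishing that the canonical model is well-defined, that Lemma~\ref{lem:com} extends underivable pairs appropriately while preserving the $\exists$- and $\forall_{\mathtt{c}}$-properties, and above all that the Truth Lemma goes through for the restricted intuitionistic connectives $\imp_{\mathtt{i}}$ and $\forall_{\mathtt{i}}$ — to have already been carried out in the preceding lemmas; the proof of Theorem~\ref{thm:sct} itself is then a short assembly of these pieces, the only mild subtlety being the bookkeeping of the naming substitutions $[\underline{z_i}/z_i]$ to match the exact shape of Definition~\ref{def:sec}.
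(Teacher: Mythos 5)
Your proposal is correct and takes essentially the same route as the paper: contraposition, extension of the underivable sequent to a prime $L$-complete pair via Lemma~\ref{lem:com}(1), observing that this pair is a world of $M^{c}$, and then applying the Truth Lemma to the identity-style assignment to refute $\Gamma \models A$. One tiny correction: the fact that $A \notin \Gamma^{*}$ follows from the (underivability) condition of Definition~\ref{def:prime} (since $A \in \Gamma^{*} \cap \Delta^{*}$ would make $\Gamma^{*} \Rightarrow \Delta^{*}$ derivable via $(Id)$), not from $L$-completeness, which only guarantees $\Gamma^{*} \cup \Delta^{*} = \mathsf{Form}_{L}$.
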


\begin{proof}
Suppose $\Gamma \Rightarrow A$ is not derivable in $\fljc$ in the original syntax $\mathcal{L}$ to show the contrapositive. 
By Lemma \ref{lem:com} (1), there exist a syntax $L$ and a prime $L$-complete pair ${( \Gamma', \Delta' )}_{L}$ such that $\mathcal{L} \sqsubset L \sqsubset \mathcal{L^{+}}$, $\Gamma \subseteq \Gamma'$, and  $\{ A \} \subseteq \Delta'$. By Definition \ref{def:cm}, the prime $L$-complete pair ${( \Gamma', \Delta' )}_{L} \in W^{c}$ holds. It is noted that $M^{c}$ {\em is} a Kripke model. 
By Lemma \ref{lem:tl}, $w \models_{M} C  [\underline{x_{1}}/x_{1}] \cdots [\underline{x_{m}}/x_{m}]$ holds for any $C \in \Gamma$, and $w \not\models_{M} A [\underline{z_{1}}/z_{1}] \cdots [\underline{z_{n}}/z_{n}]$ also holds, where $x_{1},\ldots, x_{m}$ are all free variables in $C$ and $z_{1}, \ldots, z_{n}$ are all free variables in $A$. This concludes that $\Gamma \not\models A$.
\end{proof}

\section{Conclusion and Further Direction}
\label{sec:caf}

We have four further directions to do in the future. Firstly, a Hilbert-style axiomatization (which is equipollent with $\fljc$) should be provided. We try to do this based on a Hilbert-style axiomatization proposed in~\cite{Cerro1996}, which was shown to be equipollent with propositional $\ljc$ in~\cite{Toyooka2021a}.

Secondly, we may establish Craig interpolation theorem for $\fljc$. The interpolation theorem in an ordinary first-order intuitionistic and classical logic was shown in~\cite{Troelstra2012}, but whether the theorem holds in $\fljc$ is not solved yet. Since two implications (intuitionistic and classical ones) exist in $\fljc$, the corresponding two types of the theorem can be considered. We have already shown these two theorems for propositional $\ljc$ syntactically in~\cite{Toyooka2021a} by employing the idea in~\cite{Maehara1961}.

Thirdly, strong completeness of the cut-free fragment of $\fljc$ should be interesting, because such completeness will ensure the semantic proof of the cut elimination theorem. Then, we need to check if our basic idea for the strong completeness could be done without the notion of $L$-completeness in Definition \ref{def:prime}. Such an attempt was already done to a single-succedent sequent calculus for first-order intuitionistic logic in~\cite{Hermant2005,Mints2000}, but we will deal with multi-succedent $\fljc$. 

Fourthly, comparison of $\fljc$ with $\mathbf{LEci}$, which is a sequent calculus for Prawitz's ecumenical system, should be done from a proof-theoretic viewpoint. The calculus $\mathbf{LEci}$, which is mentioned in Section \ref{sub:rw}, was provided in~\cite{Elaine2019} in the $\mathbf{G3}$ style, which has no structural rule. Therefore, what is desired is providing a sequent calculus for the system which has structural rules, since such a calculus enables the comparison of ecumenical system with $\fljc$. It should be noted that $\fljc$ and $\mathbf{LEci}$ are not equivalent. This is because $A \imp_{\mathtt{i}} (B \imp_{\mathtt{i}} A)$ is not derivable in $\fljc$, as is explained in Section \ref{subsec:semex}, but is derivable in $\mathbf{LEci}$. From a model-theoretic viewpoint, in the Kripke semantics for $\mathbf{LEci}$, all formulas satisfy heredity. 

\bibliographystyle{eptcs}
\bibliography{biblio}

\end{document}